\newtheorem{Thm}{Theorem}%[section]
 \newtheorem{Lem}{Lemma}%[section]
 \newtheorem{Def}{Definition}%[section]
\newcommand{\nni}{{\rm nni}}
\renewcommand{\L}{\ensuremath{\mathcal{L}}}
\newcommand{\I}{\ensuremath{\mathcal{I}}}
\renewcommand{\O}{\ensuremath{\mathcal{O}}}
\renewcommand{\H}{\ensuremath{\mathcal{H}}}
\newcommand{\dnni}{\ensuremath{d_{nni}}}
\newcommand{\NNI}{\ensuremath{\mathcal{N}}}
\newcommand{\wt}{{\sf wt}}
\newcommand{\parent}{{\sf parent}}
\newcommand{\pnext}{{\sf next}}
\renewcommand{\path}{{\sf path}}
\newcommand{\ppath}{{\sf path}}
\newcommand{\head}{{\sf head}}
\newcommand{\sib}{{\sf sib}}
\newcommand{\leaf}{{\sf leaf}}
\newcommand{\merge}{{\sf merge}}
\newcommand{\dist}{{\sf dist}}
\newcommand{\junc}{{\sf junc}}
\newcommand{\pend}{{\sf end}}
\newcommand{\length}{{\sf length}}
\newcommand{\rank}{{\sf rank}}
\newcommand{\opt}{{\sf opt}}
\newcommand{\pre}{{\sf pre}}
\newcommand{\post}{{\sf post}}
\newcommand{\LCA}{{\sf LCA}}
\title{Efficient Parallel Computation of Nearest Neighbor Interchange Distances}
\author{Mikael Gast\thanks{Dept. of Computer Science, University of Bonn.
    e-mail:{ \tt gast@cs.uni-bonn.de}} \and
	Mathias Hauptmann\thanks{Dept. of Computer Science, University of Bonn.
    e-mail:{ \tt hauptman@cs.uni-bonn.de}}}
\date{}
\begin{document}

\maketitle

\begin{abstract}
The nni-distance is a well-known distance measure for phylogenetic trees. We construct an efficient parallel approximation algorithm for the nni-distance in the CRCW-PRAM model running in $\O(\log n)$ time on $\O(n)$ processors. Given two phylogenetic trees $T_1$ and $T_2$ on the same set of taxa and with the same multi-set of edge-weights, the algorithm constructs a sequence of nni-operations of weight at most $O(\log n)\cdot \opt$, where \opt\ denotes the minimum weight of a sequence of nni-operations transforming $T_1$ into $T_2$. This algorithm is based on the sequential approximation algorithm for the nni-distance given by DasGupta et al. (2000).
Furthermore, we show that the problem of identifying so called good edge-pairs between two weighted phylogenies can be computed in $\O(\log n)$ time on $\O(n\log n)$ processors.
\end{abstract}

\section{Introduction}

\emph{Phylogenetic trees} (or \emph{phylogenies}) are a well-known model for the history of evolution of species. Such a tree represents the lineage of a set of todays species, or more generally a set of \emph{taxa}, which are located at the leaf-level of the tree. The set internal nodes and the topology describe the ancestral history and interconnections among the taxa. Usually phylogenetic trees have internal nodes of degree 3.
A \emph{weighted} phylogeny additionally imposes \emph{weights} on its edges, representing the evolutionary distance between two taxa or internal nodes. We call a phylogeny \emph{unrooted} or \emph{rooted}, for the latter case if a common eldest ancestor is known and is designated as the root of the tree.

Concerning the reconstruction of phylogenetic trees from a given set of genetic data, a number of different models and algorithms have been introduced over the past decades. Each method is based on a different objective criterion or distance function in the course of construction --- for example \emph{parsimony}, \emph{compatibility}, \emph{distance} and \emph{maximum likelihood}. Due to this fact, the resulting phylogenies may vary according the internal topology and leaf configuration, although they have been created over the same set of taxa. Hence it is a reasonable approach to compare different phylogenies for their similarities and discrepancies.  As well for this task many different measures have been proposed, including \emph{subtree transfer metrics} \cite{Allen2001a}, \emph{minimum agreement subtrees} \cite{Finden1985} et cetera.

In this paper we focus on a restricted subtree transfer measure to compare phylogenetic trees, namely, the \emph{nearest neighbor interchange distance} (\nni), which was introduced by D.F. Robinson in \cite{Robinson1971b}. A \emph{\nni-operation} swaps two subtrees, which are both adjacent to the same edge $e$ in the tree. See Figure \ref{fig_nni} for an illustration of the \nni-operation.
The \emph{\nni-distance} between two trees is the minimum number of \nni-operations required to transform one tree into the other.

\begin{figure}[htb]
 \centering
 \tikzstyle{every node}=[isosceles triangle,draw]
\tikzstyle{leaf}=[circle,fill=black,inner sep=2pt,minimum size=2pt]

\subfloat[possible \nni-operations]{
\begin{tikzpicture}[yscale=0.6, xscale=0.9,label distance=-5pt]%out=90,in=90,relative]
\node (a) at (0,0) {A};
\node (b) at (0,-2) {B};
\node [shape border rotate = 180] (c) at (6,0) {C};
\node [shape border rotate = 180] (d) at (6,-2) {D};
\node [leaf,label=below:$u$] (e) at (2,-1) {};
\node [leaf,label=below:$v$] (f) at (4,-1) {};
\path (e)  edge (a.east)
	   edge (b.east);
\draw (e) -- (f) node[midway,sloped,above,draw=none] {$e$};

\path (f)  edge (c.west)
	   edge (d.west);
\path[red,<->] (b) edge[dashed,bend right] (d);
\path[blue,<->] (b) edge[dashed,bend right] (c);
\end{tikzpicture}
}

\subfloat[swap subtrees B and C]{
\begin{tikzpicture}[yscale=0.6, xscale=0.9,label distance=-5pt]
\node (a) at (0,0) {A};
\node (b) at (0,-2) [blue]{C};
\node [shape border rotate = 180] (c) at (6,0) [blue]{B};
\node [shape border rotate = 180] (d) at (6,-2) {D};
\node [leaf,label=below:$u$] (e) at (2,-1) {};
\node [leaf,label=below:$v$] (f) at (4,-1) {};
\path (e)  edge (a.east)
	   edge (b.east);
\draw (e) -- (f) node[midway,sloped,above,draw=none] {$e$};

\path (f)  edge (c.west)
	   edge (d.west);
\path[blue] (b) edge[dotted,bend right] (c);
\end{tikzpicture}
}
\subfloat[swap subtrees B and D]{
\begin{tikzpicture}[yscale=0.6, xscale=0.9,label distance=-5pt]
\node (a) at (0,0) {A};
\node (b) at (0,-2) [red]{D};
\node [shape border rotate = 180] (c) at (6,0) {C};
\node [shape border rotate = 180] (d) at (6,-2) [red]{B};
\node [leaf,label=below:$u$] (e) at (2,-1) {};
\node [leaf,label=below:$v$] (f) at (4,-1) {};
\path (e)  edge (a.east)
	   edge (b.east);
\draw (e) -- (f) node[midway,sloped,above,draw=none] {$e$};

\path (f)  edge (c.west)
	   edge (d.west);
\path[red] (b) edge[dotted,bend right] (d);
\end{tikzpicture}
}
 \caption{The possible non-redundant \nni-operations relative to an internal edge $e=(u,v)$. Each triangle A,B,C,D represents a subtree of the tree. The uniform cost of this operation is the weight $\wt(e)$ of edge $e$.}%$\wt(e)$.}
 \label{fig_nni}
\end{figure}
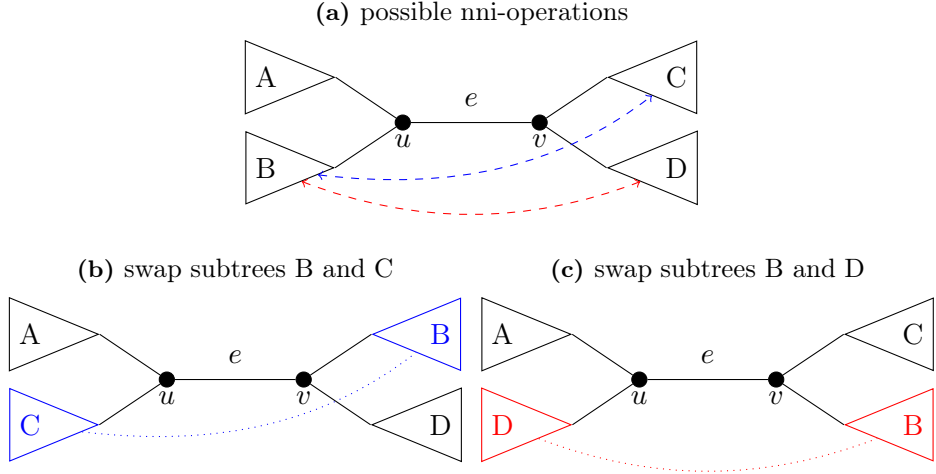

\subsection{Previous Results}

Although the \nni-distance has a simple definition in terms of a transformation of subtrees, the efficient and fast computation turned out to be surprisingly challenging.

For more than a decade, since its introduction in 1971 by Robinson \cite{Robinson1971b}, no efficient algorithm for computing the \nni-distance was known for practical (large) instances of phylogenetic trees. Day and Brown \cite{Day1985b} were the first to present an efficient approximation algorithm for unweighted instances. The algorithm runs in $\O(n\log n)$ time for unrooted and $\O(n^2\log n)$ time for rooted instances. %But it remained unclear whether the computational problem was \emph{NP-hard} or not.

Li, Tromp and Zhang \cite{Li1996} gave logarithmic lower and upper bounds on the maximum \nni-distance between arbitrary 3-regular trees. %and $\Delta(G)$ (the collection graph of 3-regular trees, with edges denoting that two trees are one $\nni$ apart). 
Furthermore, they gave an outline of a polynomial time approximation algorithm for unweighted instances with approximation ratio $\log n + \O(1)$.

DasGupta, He, Jiang, Li, Tromp and Zhang \cite{DasGupta2000} proved the NP-completeness of computing the \nni-distance on weighted and unweighted instances, and on trees with unlabeled (or non-uniformly labeled) leaves. They gave an approximation algorithm with running time $\O(n^2)$ and approximation ratio $4\log n +4$ for weighted instances. Furthermore, they observed that the \nni-distance is identical to the \emph{linear-cost subtree-transfer} distance on unweighted phylogenies \cite{DasGupta1999c} and gave an outline of an exact algorithm for distance-restricted instances with running time $\O(n^2 \log n + n \cdot 2^{11d})$.

\subsection{Our Work}

In this paper, we present an efficient \emph{parallel approximation algorithm} for the nni-distance on weighted phylogenies. This algorithm runs on a CRCW-PRAM in time $\O(\log n)$ with $\O(n\log n)$ processors and yields an approximation ratio of $\O(\log n)$. It is based on the sequential approximation algorithm by DasGupta et. al. \cite{DasGupta2000} with running time $\O(n^2)$ and approximation ratio $4(1 + \log n)$.
Especially, we obtain a CRCW-PRAM algorithm with time $\O(\log n)$ and $\O(n)$ processors for the case when no \emph{good edge-pairs} exist.
% To achieve this, we give efficient parallel algorithms for each step of DasGuptas algorithm and show that each step is in \emph{Nick's Class} (\emph{NC} for short) in terms of parallel computation complexity, i.e. can be computed in polylogarithmic time allocating a polynomial number of processors.

The  paper is organized as follows.
In Section \ref{sec:prelim} we give formal definitions of phylogenies and the \nni-distance. In Section \ref{subsec:seq_algo}, we describe the sequential approximation algorithm of DasGupta et. al. \cite{DasGupta2000}. In Section \ref{sec:par_nni} we present our new parallel approximation algorithm which consists of efficient parallel algorithms for linearizing trees (Section \ref{subsec:linear_tree}), sorting edge-permutations on linear trees (Section \ref{subsec:sorting_edges}) and sorting leaf-permutations on binary balanced trees (Section \ref{subsec:sorting_leaves}). Finally, in Section \ref{subsec:good_edge_pairs}, we present an efficient parallel algorithm to identify \emph{good edge-pairs} between two phylogenetic trees, in order to be able split up large instances and distribute the computational task already in a pre-computational step.
\section{Preliminaries}\label{sec:prelim}

We will make use of the following notation. Let $T=(V,E)$ be an undirected or directed tree, then $\L_T\subseteq V$ denotes the set of \emph{leaves} of $T$ and $\I_T\subseteq V$ the set of \emph{internal vertices} of $T$.

The most important primitives in phylogenetic analysis are taxa and phylogenies.

\begin{Def}
Given a finite set of \emph{taxa} $S=\{s_1,\ldots , s_n\}$, a \emph{phylogeny} for $S$ is a triplet 
$T=(V,E,\lambda)$ where $(V,E)$ is an undirected tree, $\lambda :\L_T\to S$ is a bijection and such that every internal node of $T$ has degree $3$.
A \emph{rooted phylogeny} for $S$ is a tuple $T=(V,E,\lambda,r)$ such that $(V,E,\lambda)$ is a phylogeny
and $r\in V$ is the root of $T$.
A \emph{weighted phylogeny} for $S$ is a tuple $T=(V,E,\lambda,\wt)$ such that $(V,E,\lambda)$ is a phylogeny and 
$\wt:E\to \mathbb{R}^+$ is a weight function on the set of edges of $T$. 
A \emph{rooted weighted phylogeny} is a tuple $T=(V,E,\lambda,\wt,r)$ such that $(V,E,\lambda,r)$ is a rooted phylogeny and 
$\wt:E\to \mathbb{R}^+$ is an edge-weight function.
\end{Def}

The \nni-distance is the minimum number of nearest neighbor interchanges (\nni) needed in order to transform one tree into another \cite{Robinson1979}:

\begin{Def}
Let $T$ be a phylogeny (possibly rooted and/or weighted) and let $e_1,e_2,e_3$ be three edges of $T$ that build a path of length three in $T$ (in this order). 
The associated \emph{\nni-operation}, denoted as a triplet $(e_1,e_2,e_3)$, transforms the tree $T$ into a new tree $T'$ by swapping the two subtrees below the edges $e_1$ and $e_3$ as shown in the Figure \ref{fig_nni_mini}. In this configuration we call the center edge $e_2$ the \emph{operating edge}. In case of weighted phylogenies the \emph{cost} of this \nni-operation is defined as $\wt(e_2)$.
\end{Def}

\begin{figure}[htb]
 \centering
 \tikzstyle{every node}=[isosceles triangle,draw]
\tikzstyle{leaf}=[circle,fill=black,inner sep=2pt,minimum size=2pt]

\begin{tikzpicture}[yscale=0.6, xscale=0.9,label distance=-5pt]%out=90,in=90,relative]
\node [inner sep=7pt] (a) at (0,0) {};
\node [inner sep=7pt,label=below right:$A$] (b) at (0,-2) {};
\node [shape border rotate = 180,inner sep=7pt,label=above left:$B$] (c) at (6,0) {};
\node [shape border rotate = 180,inner sep=7pt] (d) at (6,-2) {};
\node [leaf,label=below:$u$] (e) at (2,-1) {};
\node [leaf,label=below:$v$] (f) at (4,-1) {};
\path (e)  edge (a.east)
	   edge node[midway,sloped,above,draw=none] {$e_1$} (b.east);
\draw (e) -- (f) node[midway,sloped,above,draw=none] {$e_2$};

\path (f)  edge node[midway,sloped,above,draw=none] {$e_3$} (c.west)
	   edge (d.west);
%\path[red,<->] (b) edge[dashed,bend right] (d);
%\path[blue,<->] (b) edge[dashed,bend right] node[rectangle,near start,yshift=-7pt,sloped,draw=none] {$nni(e_1,e_2,e_3)$} (c);

\begin{scope}[xshift=10cm]
\node [inner sep=7pt] (A) at (0,0) {};
\node [blue,inner sep=7pt,label={[blue]below right:$B$}] (B) at (0,-2) {};
\node [blue,shape border rotate = 180,inner sep=7pt,label={[blue]above left:$A$}] (C) at (6,0) {};
\node [shape border rotate = 180,inner sep=7pt] (D) at (6,-2) {};
\node [leaf,label=below:$u$] (e) at (2,-1) {};
\node [leaf,label=below:$v$] (f) at (4,-1) {};
\path (e)  edge (A.east)
	   edge [blue] node[blue,midway,sloped,above,draw=none] {$e_3$} (B.east);
\draw (e) -- (f) node[midway,sloped,above,draw=none] {$e_2$};

\path (f)  edge [blue] node[blue,midway,sloped,above,draw=none] {$e_1$} (C.west)
	   edge (D.west);
%\path[red,<->] (b) edge[dashed,bend right] (d);
%\path[blue,<->] (b) edge[dashed,bend right] node[rectangle,near start,yshift=-7pt,sloped,draw=none] {$nni(e_1,e_2,e_3)$} (c);
\end{scope}

\path [blue] (c.south east) edge [bend left,shorten >=3mm,shorten <=3mm,decorate,
decoration={snake,amplitude=.4mm,segment length=3mm,
pre length=3mm,post length=3mm},-to] node [draw=none,below=-3mm,midway,sloped] {$nni(e_1,e_2,e_3)$} (A.south west);

\end{tikzpicture}
 \caption{The \nni-operation on $T$ of the subtrees $A$ and $B$ defined by the triplet $(e_1,e_2,e_3)$.}
 \label{fig_nni_mini}
\end{figure}
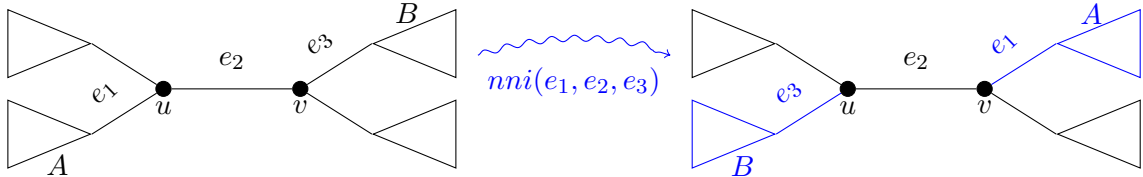

The associated genetic distance measure is the \emph{\nni-distance}:

\begin{Def}
Let $S$ be a set of taxa and let $T_1,T_2$ be phylogenies for $S$. 
The \emph{\nni-distance} $\dnni(T_1,T_2)$ of $T_1,T_2$ is the \emph{minimum length} of a sequence of \nni-operations that transforms $T_1$ into $T_2$ (and $\infty$ 
in case no such sequence exists). 
In case of weighted phylogenies $\dnni(T_1,T_2)$ is the \emph{minimum cost} of a sequence of \nni-operations 
that transforms $T_1$ into $T_2$.
\end{Def}

Given two weighted phylogenetic trees $T_i=(V_i,E_i,\lambda_i,\wt_i),\: i=1,2$ for the same set of taxa $S$, 
the following two conditions are necessary for the two trees to have a finite \nni-distance.

\begin{enumerate*}
 \item For each taxon $s\in S$,  let $e_i(s)\in E_{i}$ be the edge incident to the leaf with label $s$ in $T_i$ ($i=1,2$). Then $e_1(s)$ and $e_2(s)$ must have the same edge weight: $\wt_1(e_1(s))=\wt_2(e_2(s))$.
 \item $M_{1}=M_{2}$, where $M_{i}$ denotes the multiset of edge-weights of $T_i$.
\end{enumerate*}

In order to identify parts or subtrees of the tree that require a ``large'' or ``small'' amount of work to be transformed into their counterparts from the other tree, the notion of \emph{good edge-pairs} and \emph{bad edges} or \emph{non-shared edges} according to the set of leaf-labels and edge-weights is used in the literature (cf. \cite{Robinson1979,DasGupta2000}).

\begin{Def}{\bf (Good Edge-Pairs, Bad Edges)}\\
 Let $T_1$ and $T_2$ be two weighted phylogenies for the set of taxa $S$.
 Two internal edges $e_i \in E_{T_1}$ and $e_j \in E_{T_2}$ form a \emph{good edge-pair} if and only if the following conditions hold:
 \begin{enumerate*}
  \item $\wt_1(e_i) = \wt_2(e_j)$.
  \item Both edges induce the same partition of the multiset of edge-weights on $T_1$ and $T_2$.
  \item Both edges induce the same partition of the set of leaf-labels on $T_1$ and $T_2$.
 \end{enumerate*}
An edge $e_i \in E_{1}$ is called \emph{bad} if there does not exist any edge $e_j \in E_{2}$ 
such that $(e_i,e_j)$ forms a good edge-pair.
\label{def_gep}
\end{Def}
If $e_i$ and $e_j$ form a good edge pair, no \nni-move with operating edge $e_i$ is needed to transform $T_1$ into $T_2$.

\subsection{DasGupta's Sequential Approximation Algorithm}\label{subsec:seq_algo}
In this section we give an outline of DasGupta's approximation algorithm \cite{DasGupta2000} for the \nni-distance on weighted phylogenies on a set $S$ of $n$ taxa. 
For the ease of notation we assume that the phylogenies are rooted. Unless otherwise mentioned we will refer to these rooted and weighted phylogenies on $S$ as \emph{phylogenies} for short. Hence for the rest of this paper, a phylogeny is always a rooted and weighted phylogeny $T=(V,E,\lambda,r)$.

\begin{Thm}
 \emph{\cite{DasGupta2000}} Let $T_1$ and $T_2$ be two phylogenies. Then $\dnni(T_1,T_2)$ can be approximated within $\O(n^2)$ time and A.R. $4(1 + \log n)$.
\label{thm_main}
\end{Thm}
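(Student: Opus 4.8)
The plan is to mirror DasGupta et al.'s two-phase approach: first cut $T_1$ and $T_2$ along good edge-pairs into matching pieces that can be handled independently; then transform each piece through a canonical (linear, then balanced) shape by a sorting-network-type schedule of \nni-moves; and finally bound $\opt$ from below by a combinatorial quantity that matches the cost of this schedule up to the factor $4(1+\log n)$.

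\emph{Decomposition.} First I would root both trees and, for each edge, compute a signature of the bipartition it induces on the set of leaf-labels and on the multiset of edge-weights; since the ``below''-sets form two laminar families, this --- and hence the list of all good edge-pairs --- can be produced in $\O(n^2)$ time. Note that an \nni-move with operating edge $e_2$ changes the induced bipartitions only of the three path-edges $e_1,e_2,e_3$ (an edge interior to a subtree that is merely relocated keeps both of its induced partitions), and recall that a good edge is never needed as an operating edge. Hence the good edges together with the root split $T_1$ into maximal subtrees $Z_1^{(1)},\dots,Z_1^{(m)}$ and $T_2$ into corresponding subtrees $Z_2^{(1)},\dots,Z_2^{(m)}$ with matching boundary leaf-labels and matching multisets of internal edge-weights; it then suffices to transform each $Z_1^{(j)}$ into $Z_2^{(j)}$ separately, the costs add up, and $\opt = \sum_j \dnni(Z_1^{(j)}, Z_2^{(j)})$. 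Inside such a zone, every internal edge is bad.

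\emph{Transforming one zone.} For a single zone I would route the transformation through a canonical form in three steps: (a) \emph{linearize} $Z_1^{(j)}$ into a caterpillar by repeatedly applying \nni-moves that pull subtrees toward the zone-root, showing by a direct accounting that this costs $\O(1)$ times the total internal weight $W_j$ of the zone; (b) on the caterpillar, rearrange the sequence of internal edge-weights into the target order; (c) rearrange the dangling leaves into the target order, carrying this last step out on a balanced binary tree obtained from the caterpillar rather than on the caterpillar itself. For (b) and (c) I would use a merge-sort-type schedule: these canonical trees support local rearrangements of adjacent elements by single \nni-moves, so $k$ elements can be sorted in $\O(\log k)$ passes while the weights of the operating edges used in any one pass sum to $\O(W_j)$; placing heavy edges near the leaves of the balanced tree keeps the constant in check, so the whole zone is handled at cost at most $(1+\log n)\,W_j$. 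Summing over zones, the algorithm outputs a sequence of total weight at most $(1+\log n)\sum_j W_j$ and of total length $\O(n\log n)$, and the whole computation can be carried out within $\O(n^2)$ time.

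\emph{Lower bound and the main obstacle.} It remains to prove $\dnni(Z_1^{(j)}, Z_2^{(j)}) \ge \tfrac14 W_j$, which combined with the above yields the approximation ratio $4(1+\log n)$. This is the step I expect to be the genuine difficulty: a single \nni-move costs only the weight of its operating edge, yet it changes the status of three edges at once, so a naive ``at least one move per bad edge'' count gives no weighted bound, and one must rule out, in particular, schedules that repair heavy bad edges cheaply by operating on a light neighbor. I would handle this with an amortized/charging argument: take as potential the total weight of the still-bad internal edges of the current zone, and show that one \nni-move decreases it by at most a constant times the move's own cost --- exploiting that $e_1,e_2,e_3$ lie on a length-three path, together with the weight-partition condition built into the definition of a good edge-pair --- so that the weighted number of moves is $\Omega(W_j)$. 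Propagating the constants through the decomposition, the linearization, the two sorting stages, and the charging argument is what fixes the precise factor $4(1+\log n)$, while the $\O(n^2)$ running time is exactly the bookkeeping recorded above.
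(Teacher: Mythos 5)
Your overall route (decompose along good edge-pairs, then push each piece through a linear tree and a balanced auxiliary tree by a merge-sort-style schedule, and compare against a weight lower bound) is the same route the paper outlines for DasGupta's algorithm. But the way you distribute the constants is not justified and, as written, does not yield the ratio $4(1+\log n)$. Your upper bound claims that an entire zone is handled at cost at most $(1+\log n)\,W_j$ with leading constant $1$; in the actual construction there are several phases that each cost a separate multiple of $W_j$ --- linearizing $T_1$, linearizing the auxiliary (balanced) tree, $\log n$ merging rounds each of cost up to $W_j$, and the leaf-sorting phase, all doubled because the same is done from the $T_2$ side --- and it is exactly this bookkeeping that produces the factor $4$ in $4(1+\log n)$. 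You wave this through with ``keeps the constant in check,'' so the claimed factor is not derived; if the upper bound is really $c(1+\log n)W_j$ with $c>1$, your argument only gives ratio $4c(1+\log n)$.

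The deeper problem is in the lower bound, which you identify as the main obstacle and try to settle with a potential argument giving only $\dnni \ge \tfrac14 W_j$. The premise you start from --- that an \nni-move changes the induced bipartitions of all three path-edges $e_1,e_2,e_3$ --- is wrong, and under that premise your potential argument cannot work: a move operating on a light edge flanked by two heavy edges would decrease the ``total weight of bad edges'' potential by far more than the move's cost, so no weight-independent constant (let alone $\tfrac14$) comes out; you even name this scenario but give no mechanism to exclude it. The correct observation is that only the \emph{operating} edge $e_2$ changes its induced partitions (the subtrees below $e_1$ and $e_3$ are moved wholesale, so those edges still separate the same leaf-labels and the same weight sub-multisets). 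With that, the potential argument is immediate and gives the full bound $\dnni(T_1,T_2)\ge W$ when no good edge-pair exists --- this is precisely Lemma \ref{lem_nni_1} in the paper --- i.e.\ constant $1$, not $\tfrac14$. So the factor $4$ lives entirely on the algorithmic (upper-bound) side, not on the lower-bound side where you placed it; to repair the proof you should prove the $\ge W$ lower bound via the single-edge observation and then carry out the per-phase cost accounting of the four transformation stages to get the $4(1+\log n)W$ upper bound. (Also, the additivity $\opt=\sum_j \dnni(Z_1^{(j)},Z_2^{(j)})$ across good edge-pairs is asserted but needs an argument; the paper only uses the direction that good edges never need to serve as operating edges.)
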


Given two phylogenies $T_1, T_2$, at first the multisets of edge-weights of internal edges of both, $T_1$ and $T_2$, are sorted in $\O(n \log n)$ time. In case these two multisets differ, $T_1$ and $T_2$ do not have a finite \nni-distance. Hence, from now on we assume that $\{w_1,w_2,\dots,w_{n-3}\}$ is the multiset of edge-weights of internal edges of both $T_1$ and $T_2$ and that
$w_1\leq w_2\leq \dots \leq w_{n-3}$ holds. Furthermore let $W:=\sum_{i=1}^{n-3}w_i$ be the sum of all edge weights of internal edges of $T_i, i\in\{1,2\}$.

\begin{Lem}
\emph{\cite{DasGupta2000}} If $\dnni(T_1,T_2)<\infty$ and $T_1$ and $T_2$ have no good edge pairs, then $\dnni(T_1,T_2) \geq W$.
\label{lem_nni_1}
\end{Lem}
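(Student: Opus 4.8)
The plan is to charge the cost of any \nni-sequence realising the transformation against the internal edges of the trees, arguing that every internal edge must serve as an operating edge at least once. Throughout, I would regard the edge set (with its weights) as a fixed ground set that an \nni-move merely re-wires: a move changes the tree structure but neither the set of edges, nor the weight function, nor which edges are internal. Write $E^{\mathrm{int}}$ for this fixed set of $n-3$ internal edges and, for a phylogeny $T$ on this ground set and $e\in E^{\mathrm{int}}$, define the \emph{signature} $\mathrm{sig}_T(e)=(\pi^{\mathrm{leaf}}_T(e),\pi^{\mathrm{wt}}_T(e))$, where deleting $e$ from $T$ partitions the taxon set $S$ as $\pi^{\mathrm{leaf}}_T(e)$ and partitions the common multiset $M$ of edge-weights as $\pi^{\mathrm{wt}}_T(e)$. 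By Definition~\ref{def_gep}, two internal edges $e$ of $T_1$ and $f$ of $T_2$ form a good edge-pair exactly when $\wt(e)=\wt(f)$ and $\mathrm{sig}_{T_1}(e)=\mathrm{sig}_{T_2}(f)$; taking $f=e$ and using that weights are fixed, the hypothesis that $T_1,T_2$ have no good edge-pairs therefore forces $\mathrm{sig}_{T_1}(e)\neq\mathrm{sig}_{T_2}(e)$ for every $e\in E^{\mathrm{int}}$.

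The heart of the proof is the following claim about a single move: if $T'$ is obtained from $T$ by the \nni-operation $(e_1,e_2,e_3)$ with operating edge $e_2$, then $\mathrm{sig}_{T'}(e)=\mathrm{sig}_T(e)$ for every internal edge $e\neq e_2$. To establish it I would observe that the move only permutes how the four subtrees $A,B,C,D$ hanging off the path $e_1e_2e_3$ attach to its two interior vertices (cf.\ Figure~\ref{fig_nni}); any internal edge $e\neq e_2$ lies entirely inside one of these four subtrees, or is one of $e_1,e_3$, or is the remaining edge at one of the two interior vertices, and in each case the two components into which $e$ splits the tree keep exactly the same leaf-set and the same edge-weight multiset, so $\mathrm{sig}(e)$ is unchanged. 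This local case analysis is the one step I expect to require genuine care; the rest is bookkeeping, and we do not even need the (true) complementary fact that $\mathrm{sig}(e_2)$ really does change.

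Finally I would compose the moves. Let $\sigma=(o_1,\dots,o_k)$ transform $T_1$ into $T_2$, with operating edges $f_1,\dots,f_k\in E^{\mathrm{int}}$. By the claim, iterated along $\sigma$, any internal edge that never occurs among $f_1,\dots,f_k$ has the same signature in $T_1$ as in $T_2$, contradicting the first paragraph. Hence every internal edge appears at least once in $f_1,\dots,f_k$, so
\[
 \mathrm{cost}(\sigma)\;=\;\sum_{j=1}^{k}\wt(f_j)\;\ge\;\sum_{e\in E^{\mathrm{int}}}\wt(e)\;=\;W .
\]
Since $\sigma$ was an arbitrary transforming sequence, $\dnni(T_1,T_2)\ge W$.
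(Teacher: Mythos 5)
Your argument is correct: tracking each internal edge (with its fixed weight) through the sequence, and observing that an \nni-move leaves the leaf- and weight-partitions induced by every non-operating edge unchanged, forces every internal edge to serve as an operating edge at least once (else it would yield a good edge-pair), giving cost at least $W$. The paper itself offers no proof of Lemma~\ref{lem_nni_1} --- it is quoted from DasGupta et al.~\cite{DasGupta2000} --- and your proof is essentially the standard argument behind that citation, with the only cosmetic caveat that the identification of $E^{\mathrm{int}}(T_1)$ with $E^{\mathrm{int}}(T_2)$ you use in the first paragraph is really the one induced by the transforming sequence, which your final paragraph supplies anyway.
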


DasGupta's algorithm makes use of two different trees associated to each of the given phylogenies $T_1,T_2$, which we call the \emph{auxiliary tree} and the \emph{linear tree}.

Let $T=(V,E,\lambda,\wt,r)$ be a phylogeny. An \emph{auxiliary tree} ${T'}=(V,E',\lambda,\wt',r)$ is a phylogeny on the same set of vertices $V$ and labeling of taxa $\lambda$ that has the following properties:
 \begin{itemize*}
  \item all leaves $l,l'\in\L_{{T'}}$ are of balanced height, $|depth_{{T'}}(l)-depth_{T'}(l')|=1$,
  \item the multisets of edge-weights in the trees $T$ and $T'$
 are the same, $M=M'$,
  \item the edge-weights of internal edges on every path from $r$ to a leaf in $T'$ are non-descending.
 \end{itemize*}

If the set $M$ of edge-weights is sorted such that $w_1\leq w_2\leq \dots \leq w_{n-3}$ holds, we achieve the auxiliary tree property by arranging the edge-weights in $M$ on an binary balanced tree such that, at level $i$, $w_{2^{i}-1+j}$ is the $j$-th edge-weight assigned to an edge from the left.
DasGupta's algorithm constructs auxiliary trees ${T'_i}=(V_i,E_i',\lambda_i,\wt_i',r_i)$, $i=1,2$, for $T_1$ and $T_2$.
Then both the original phylogenies $T_i$ and the associated auxiliary trees ${T'_i}$ are transformed into so called \emph{linear trees}:
For a given phylogeny $T=(V,E,\lambda,\wt,r)$, a \emph{linear tree} $L_T=(V,E'',\lambda,\wt'',r)$ of $T$ is a phylogeny with the same labeling $\lambda$ and such that every internal node is adjacent to at least one leaf (cf. Figure \ref{fig_ltree}).

\begin{figure}[htb]
 \centering
 \begin{tikzpicture}[vertex/.style={circle,draw},leaf/.style={fill=black,circle, inner sep=2pt}]

\foreach \a/\b/\c/\d in {leaf/A/1/1, leaf/B/2/1, leaf/C/3/1, leaf/E/5/1, leaf/F/6/1, leaf/a1/0/0, leaf/a2/1/0, leaf/b/2/0, leaf/c/3/0, leaf/e/5/0, leaf/f1/6/0, leaf/f2/7/0}
 \node[\a] (\b) at (\c,\d) {};

\foreach \a/\b/\c in {A/B/e_{1}, B/C/e_{2}, E/F/e_{n-3}}
 \draw (\a) -- (\b) node[midway,sloped,above,draw=none] {$\c$};

\foreach \a/\b in {A/a1, A/a2, B/b, C/c, E/e, F/f1, F/f2}
 \path (\a) edge (\b);

\path (C) edge[densely dotted] node[midway,yshift=-5mm,below] {$\dots$} (E);

\end{tikzpicture}
 \caption{The linear tree $L$ with internal edges $e_1,e_2,\dots,e_{n-3}$.}
 \label{fig_ltree}
\end{figure}
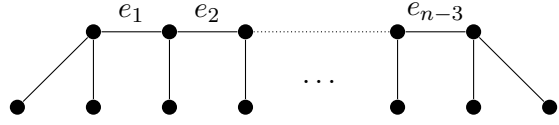

Then a variant of merge-sort is used to transform the order of internal edge of $L_{T_{1}}$ into the ordering of $L_{T_{2}}$. To transform the auxiliary tree ${T'_{1}}$ into ${T'_{2}}$ it remains to sort the order of leaves to complete the transformation from $T_1$ into $T_2$. Algorithm \ref{DSA} gives a pseudo-code description of DasGupta's algorithm.

\begin{algorithm}[!htb]
\KwIn{Rooted phylogenetic trees $T_1,T_2$.}
\KwOut{\nni-distance $\dnni(T_1,T_2)$ and a sequence \NNI\, of \nni-operations transforming $T_1$ into $T_2$.}
\Begin{
 \For{$i=1,2$}{
   \lnlset{bal}{1}Construct \emph{auxiliary trees} ${T'_i}$\;
   \tcc{generate \nni-sequence $\NNI_i$ to transform $T_i$ into ${T'_i}$}
   \lnlset{lin}{2}Generate sequence $(t_{i,1},\ldots,t_{i,j(i)})$ that transforms $T_i$ into a \emph{linear tree} $L_{T_i}$\;
   Generate sequence $(a_{i,1},\ldots,a_{i,k(i)})$ that transforms ${T'_i}$ into a \emph{linear tree} $L_{{T'_i}}$\;
   \lnlset{sort}{3}Generate \emph{merge-sort}-sequence $(s_{i,1},\ldots,s_{i,l(i)})$ that transforms $L_{T_i}$ into ${L_{{T'_i}}}$\;
   $\NNI_i := (t_{i,1},\ldots,t_{i,j(i)},\,s_{i,1},\ldots,s_{i,l(i)},a_{i,k(i)},\ldots,a_{i,1})$\;
   \tcc{note that sequence $(a_{i,1},\ldots,a_{i,k(i)})$ is reversed in order to allow back-transformation to ${T'_i}$}
 }
 \lnlset{leaf}{4}Generate sequence $(b_1,\ldots,b_m)$ to transform ${T'_1}$ into ${T'_2}$\;
 $\NNI := \NNI_1 \circ (b_1,\ldots,b_m) \circ \NNI'_2$\;
 \tcc{note that sequence $\NNI'_2$ is reversed for back-transformation to $T_2$}
}
\caption{\textsc{DasGupta's\_Sequential\_Algorithm}\label{DSA}}
\end{algorithm}

In case there exist good edge-pairs, these pairs yield a decomposition of $T_1,T_2$ into subtrees and Algorithm \ref{DSA} is applied to each pair of associated subtrees from $T_1$ and $T_2$. The parallel computation of good and bad edges will be treated in Section \ref{subsec:good_edge_pairs}. In the following, let us assume that there exists no good edge-pair between $T_1$ and $T_2$.
\section{Parallel Computation of the \nni-Distance}\label{sec:par_nni}

In this section we construct efficient parallel algorithms for the three steps of DasGupta's algorithm in the CRCW-PRAM-model. We start with a definition for the classification of internal nodes.

When $T$ is a 3-regular phylogeny (i.e each internal node has degree $3$ in $T$), 
the internal nodes of $T$ can be classified with respect to the number of adjacent leaves.

\begin{Def}
 Let $T=(V,E,\lambda,\wt)$ be a 3-regular phylogeny. Let $\L$ be the set of leaves in $T$. An internal node $v\in \I=(V \setminus \L)$ is called
 \begin{itemize*}
  \item an \emph{endnode} ($v \in V_{\pend}$), if it is adjacent to two leaves and one internal node,
  \item a \emph{pathnode} ($v \in V_{\path}$), if it is adjacent to one leaf and two internal nodes,
  \item a \emph{junction-node} ($v \in V_{\junc}$), if it is adjacent to three internal nodes in $T$.
 \end{itemize*}
\label{def_nodes}
\end{Def}
This notation will be used in the course of the linearization-step \ref{lin} of the sequential algorithm.

\subsection{Linearizing Trees}\label{subsec:linear_tree}

In the first algorithmic step, both $T_1, T_2$ and their associated auxiliary trees $T_1',T_2'$ are transformed into \emph{linear trees} $L_1, L_2, L_1', L_2'$ respectively (cf. Figure \ref{fig_ltree}). Let us first give an outline of our parallel linearization procedure, which consists of three phases:

\begin{description}
 \item{\emph{1. Activation-Phase:}} We proceed in a bottom-up manner at the boundary of the tree, i.e. at endnodes $v \in V_{\pend}$ defined above. At every endnode $v$ a process is started that builds the path to the next junction-node $u \in V_{\junc}$ and \emph{activates} $u$ to prepare the junction node for insertion of the path from $v$.
 
 If a junction-node $u$ is activated by more than one endnode in the activation phase, among the two paths meeting at $u$ we select the one of smaller weight for insertion. Let this path consist of $k$ internal edges $e_1,\dots,e_k$ where $e_1$ is incident to $u$.
 \item{\emph{2. Insertion-Phase:}} We generate the sequence of \nni-operations that is used for the insertion of the selected path at the junction-node $u$. This yields a sequence of \nni-operations of length $k$, the length of the path to be inserted. The internal edges $e_1,\dots,e_k$ are the operating edges of these \nni-moves.
 \item{\emph{3. Update-Phase:}} In the last phase the tree topology and the pointers inside the tree are updated.% after each Insertion-Phase.
\end{description}

These three phases are repeated until the trees $T_1, T_2, T_1', T_2'$ are transformed into linear trees $L_1, L_2, L_1', L_2'$, respectively.

\paragraph{Generating the Endnode-Paths for Insertion}

Algorithm \ref{TJD} computes for every node $v$ the distance $\dist(v)$, edge-list $\path(v)$, length $\length(v)$ and the head $\head(v)$ of the path to the next junction- or endnode $\pnext(v)$ heading towards root $r$. These values are computed efficiently in parallel via \emph{parallel pointer jumping} in $\O(\log n)$ time on $n$ processors.

\begin{algorithm}[!htb]
\KwIn{Phylogeny $T$ with root $r$ and pointer $\parent(v)$ for all $v$ in $T$ and sets of junction- and endnodes $V_{\junc}$ and $V_{\pend}$.}
\KwOut{For every node $v$ in $T$ the values $\dist(v)$, $\ppath(v)$, $\length(v)$, $\pnext(v)$ and $\head(v)$.}
\BlankLine
\Begin{
 \ForEach{$v \in V$ \bf parallel}{
  $\dist(v) := \wt(e_v)$\tcc*{initialize with parent edge $e_v=(v,\parent(v))$}
  $\ppath(v) := e_{v}$\;
  $\head(v) := v$\;
  $\length(v) := 1$\;
  $\pnext(v) := \parent(v)$\;
  \While{$\pnext(v) \notin V_{\junc} \cup V_{\pend}$}{
   $\dist(v) := \dist(v) + \dist(\pnext(v))$\;
   $\ppath(v) := \ppath(v) \circ \ppath(\pnext(v))$\;
   $\head(v) := \pnext(v)$\;
   $\length(v) := \length(v) + \length(\pnext(v))$\;
   $\pnext(v) := \pnext(\pnext(v))$\tcc*{Pointer-Jumping}
  }
 }
}
\caption{\textsc{Endnode\_Paths}\label{TJD}}
\end{algorithm}

\begin{figure}[!htb]
 \centering
 \subfloat[situation at junction-node $u$]{
\begin{tikzpicture}[vertex/.style={circle,draw},
	leaf/.style={fill=black,circle, inner sep=2pt},
	every pin/.style={gray,pin distance=10mm},
	every pin edge/.style={<-,gray,decorate,decoration={snake,segment length=6pt,pre length=4pt,amplitude=1pt}},
	every fit/.style={dashed,ellipse,rounded corners},
	node distance=10mm,
	on grid]

\node[leaf,label=above left:{$u$},pin={above right:{$\pnext(v_k)=\pnext(w)$}}] (u) at (0,0) {};
\node[leaf,above=of u, label=above right:$r$] (root) {};

\node[leaf,below left=of u,label=left:{$v_1$},pin={above left:{$\head(v_k)$}}] (u1) {};
\node[leaf,below right=of u,label=right:{$x$}] (u2) {};
\node[leaf,below=of u1] (v1) {};
\node[leaf,below=of u2, label=below right:$w$] (w) {};
\node[leaf,below=of v1,label=below left:$v_k$] (v) {};

\path[->] (u) edge[dotted] (root)
	(u1) edge node[label=above left:$e_1$] {} (u) %node[pin={above left:$path.last(v)$}] (u1_pin) {} (u)
	(v1) edge[dotted] (u1)
	(v) edge node[label=left:$e_k$] {} (v1)
	(u2) edge node[label=above right:$e_x$] {} (u)
	(w) edge[dotted] (u2);

%\node [draw=blue,fit=(v.west) (u1) (u1_pin),label={[blue]left:$path(v)$}] {};

%\draw[blue,fill,dashed,decorate,decoration={triangles,segment length=7pt}] plot[smooth cycle] coordinates{([xshift=-4pt]v.south west) ([xshift=-4pt]u1.north west) ([xshift=-4pt]u.north)};
\draw[decorate,decoration={triangles,segment length=7pt},red,fill] ([xshift=4pt] v.south east) .. controls ([xshift=3pt]u1) .. ([xshift=4pt] u.south);

\end{tikzpicture}
}
%\hspace*{1cm}
\subfloat[after insertion of $\ppath(v_k)$]{
\begin{tikzpicture}[vertex/.style={circle,draw},
	leaf/.style={fill=black,circle, inner sep=2pt},
	every pin/.style={gray,pin distance=15mm},
	every pin edge/.style={<-,solid,gray,decorate,decoration={snake,segment length=6pt,pre length=4pt,amplitude=1pt}},
	every fit/.style={dashed,ellipse,rounded corners},
	node distance=10mm,
	on grid]

\node[leaf,label=above right:{$u$}] (u) at (0,0) {};
\node[leaf,left=of u, label=above:$r$] (root) {};

\node[leaf,below right=of u,label=below left:{$v_1$}] (u1) {};
\node[leaf,right=of u1,label=above right:$v_k$] (v) {};
\node[leaf,below right=of v,label=below left:{$x$}] (u2) {};
\node[leaf,right=of u2, label=above:$w$] (w) {};

\path[->] (u) edge[dotted] node[pin={below:{$\pnext(w)$}}] (r_pin) {} (root)
	(u1) edge (u)
	(v) edge[dotted] (u1)
	(u2) edge (v)
	(w) edge[dotted] (u2);

\end{tikzpicture}
}
 \caption{Insertion of $\path(v_k)$ from endnode $v_k$ adjoining junction-node $u=\pnext(v_k)$.}
 \label{fig_linearize}
\end{figure}
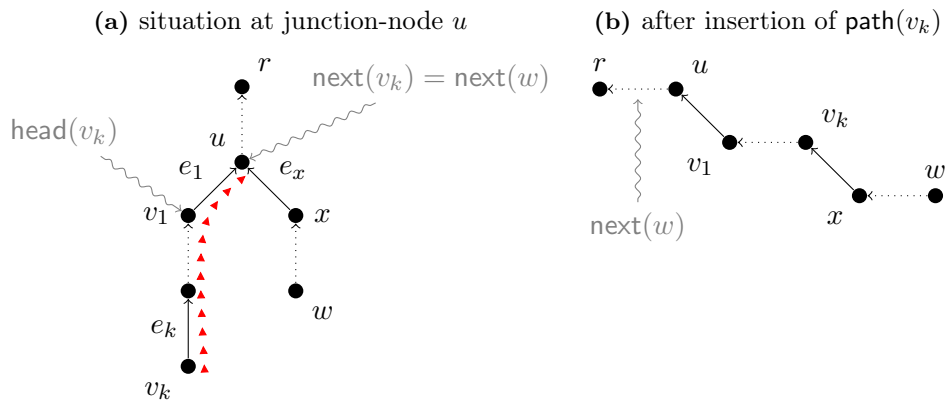

\paragraph{Parallel Linearization of Trees}

We are now ready to formulate Algorithm \ref{PLT} for the linearization of a tree $T$. Figure \ref{fig_linearize} illustrates the notation used in Algorithm \ref{TJD} and \ref{PLT}, and shows the result of an insertion-process.

\begin{algorithm}[!htb]
\KwIn{A phylogeny $T$ with root $r$.}
\KwOut{A list \NNI of \nni-operations which transforms $T$ into $L_T$.}
\BlankLine
 \While{$\exists  u \in V_{\junc}$}{
  $\textsc{Endnode\_Paths}(T)$\tcc*{re-generate paths and pointers}
  \ForEach{$v_k \in V_{\pend}$ \bf parallel}{
   $u := \pnext(v_k)$\;
   $a(u) := v_k$\tcc*{activate $u$ from $v_k$, $k=\length(v_k)$}
  }
  \ForEach{active $u \in V_{\junc}$ \bf parallel}{
   $x := \big(\sib(u)\neq\head(a(u))\big)$\;
   \ForEach{$1\leq i \leq k$ \bf parallel}{
    $\NNI_u[i] := \big(\big(\leaf(v_i),v_i\big), e_i,\, e_x\big)$\tcc*{generate \nni-triplets for every operating edge $e_i$ on the path to $v_k$}
   }
   $\NNI := \NNI \circ \NNI_u$\tcc*{concatenate list of \nni's}
   $\parent(x) := v_k$\tcc*{insertion of the path at $x$}
   $\wt((x,v_k)) := \wt((x,u))$\;
   $V_{\junc} := V_{\junc} \setminus \{u\}$\tcc*{deletion of $u$ from the set of junction-nodes}
  }
 }
\caption{\textsc{Parallel\_Linear\_Tree}\label{PLT}}
\end{algorithm}

\begin{Lem}\label{lem:linear_tree}
 Algorithm \ref{PLT} transforms a given phylogeny $T$ into a linear tree $L_T$ in $\O(\log n)$ time on $n$ processors.
\end{Lem}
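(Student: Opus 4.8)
I would prove the statement in two independent parts: first \emph{correctness} --- that Algorithm~\ref{PLT} halts and the list $\NNI$ it returns is an \nni-sequence transforming $T$ into a linear tree --- and then the \emph{resource bounds} $\O(\log n)$ time on $n$ processors. Throughout I carry the loop invariant that after each \textbf{while}-iteration $T$ is still a rooted weighted phylogeny with the original taxon labelling $\lambda$ and the original multiset $M$ of edge-weights, and that applying to the original tree the prefix of $\NNI$ produced so far yields the current $T$.

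\textbf{Correctness and termination.} Fix one iteration and one active $u\in V_{\junc}$ and track the block $\NNI_u$ of $k=\length(a(u))$ triplets. After \textsc{Endnode\_Paths} the pointers describe, for the activating endnode $v_k$, exactly the path $e_1,\dots,e_k$ of Figure~\ref{fig_linearize} with $e_1$ incident to $u$, and the selection rule fixes one such path even when $u$ is activated by two endnodes. Applying $\NNI_u[1],\dots,\NNI_u[k]$ in this order, I would check by induction on $i$ that after the first $i-1$ moves the heavier subtree at $u$ (rooted at $x$) has been pushed down to hang at $v_{i-1}$, so that $(\leaf(v_i),v_i)$, $e_i$ and the current edge above $x$ are three consecutive edges and $\NNI_u[i]$ is a legal \nni-move; and that the cumulative effect of the block is to splice $e_1,\dots,e_k$ into the spine between $u$ and $x$, turning $u$ and $v_1,\dots,v_{k-1}$ into pathnodes and the former endnode $v_k$ into a pathnode now carrying the subtree of $x$. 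The assignment $\wt((x,v_k)):=\wt((x,u))$ keeps the edge joining that subtree to the spine consistent, and one verifies that $\lambda$, all leaf-edge weights, and $M$ are preserved; since $u$ leaves $V_{\junc}$ and no new junction is created, $|V_{\junc}|$ strictly decreases, so the loop halts. I would also argue that the simultaneous resolutions of different active junctions do not interfere: the inserted endnode-paths are pairwise vertex-disjoint (each endnode ascends only to its own nearest junction), the only further modification a resolution performs is the reattachment of its disjoint sibling subtree, and these commute. When the loop exits $V_{\junc}=\emptyset$, i.e.\ every internal node is an endnode or a pathnode and hence adjacent to at least one leaf --- exactly the definition of a linear tree $L_T$ --- and concatenating the emitted blocks gives an \nni-sequence from $T$ to $L_T$.

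\textbf{Resource bounds.} Place one processor on each vertex. One call of \textsc{Endnode\_Paths} is the standard parallel pointer-jumping computation and runs in $\O(\log n)$ time on $n$ processors; building all the $\NNI_u$ uses one processor per operating edge, and $\sum_{u}\length(a(u))$ is at most the number of internal edges, so a single \textbf{while}-iteration costs $\O(\log n)$ time and $\O(n)$ processors. The key quantity is the number of iterations. Pass to the \emph{junction-tree} $\tilde T$ obtained from $T$ by suppressing pathnodes and deleting leaves: its leaves are the endnodes, its internal vertices are the junctions, and every internal vertex has degree exactly $3$. Since an endnode's only internal neighbour is its parent, a junction $u$ fails to be activated in a round iff its two non-root $\tilde T$-neighbours are both junctions; and, with at most one exception near the root, the root-side $\tilde T$-neighbour of a junction is always a junction, so the non-activated junctions are, up to $\O(1)$, precisely the degree-$3$ vertices of the tree $\tilde T'=\tilde T\setminus\{\text{leaves}\}$. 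In a tree on $m$ vertices with maximum degree $3$ the number of degree-$3$ vertices is less than $m/2$, so more than half of the junctions are resolved in each round, giving $\O(\log n)$ rounds. To obtain overall time $\O(\log n)$ rather than $\O(\log^2 n)$, I would not rebuild \textsc{Endnode\_Paths} from scratch each round but maintain the pointer-jumping tables and extend them by one doubling step per round --- the chains at most double in length between consecutive rounds --- so that the total pointer-jumping cost telescopes to $\O(\log n)$.

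\textbf{Main obstacle.} The delicate points are the geometric-decrease claim, namely the precise characterisation of which junctions are activated once rootedness is accounted for (so that a constant fraction is removed per round), and the verification that the simultaneous resolutions genuinely commute; the amortised accounting of the repeated \textsc{Endnode\_Paths} invocations needed for the $\O(\log n)$ (rather than $\O(\log^2 n)$) time bound is the other place where care is required.
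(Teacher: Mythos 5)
Your proposal follows the same skeleton as the paper's proof: per round, \textsc{Endnode\_Paths} is recomputed by pointer jumping in $\O(\log n)$ time on $n$ processors, activations and insertions are done in $\O(1)$ parallel time, and a counting argument shows that a constant fraction of the junctions disappears per round, so there are $\O(\log n)$ rounds. Your counting (non-activated junctions are, up to $\O(1)$, the degree-$3$ vertices of the junction tree with its leaves removed, hence fewer than half) is a valid variant of the paper's argument that each junction is activated by at most two endnodes so at least half of the endnodes succeed; both give the same bound. The correctness/commutation discussion is extra detail the paper omits entirely, and it is essentially sound.

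The genuine problem is the amortization step, the one place where you go beyond the paper: the claim that ``the chains at most double in length between consecutive rounds'' is false. If the junctions form a descending path $u_1,\dots,u_m$ with consecutive $u_i$ adjacent and each $u_i$ carrying an endnode child at distance one, then \emph{every} $u_i$ is activated in the same round, all are resolved simultaneously, and the surviving chain after that round is a concatenation of $\Theta(m)$ old chains of length $\O(1)$ --- growth by a factor $\Theta(n)$, not $2$. (Such bursts can also be delayed to later rounds, e.g.\ by replacing each endnode child by a balanced junction subtree of the appropriate depth, so the failure is not confined to the last round.) Consequently the claimed telescoping to $\O(\log n)$ total pointer-jumping time does not follow from your invariant; an incremental scheme would need a genuinely different potential/accounting argument, and it is not obvious one exists. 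You have, to be fair, put your finger on a soft spot: the paper's own proof only charges $\O(\log n)$ per iteration for the regenerated \textsc{Endnode\_Paths} and $\lceil\log n\rceil$ iterations, i.e.\ as written it supports an $\O(\log^2 n)$ bound rather than the stated $\O(\log n)$, and it offers no amortization at all. But your proposed fix, as stated, does not close that gap either.
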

\begin{proof}
 In every iteration endnode-paths are newly generated in time $\O(\log n)$ on $n$ processors. Then junction-nodes are activated and paths are inserted in parallel for every active junction-node, i.e. for every active endnode in constant time using $n$ processors.

 Now let $|V_{\pend}|=l_0$ be the initial number of endnodes in $T$ in iteration $0$ of the linearization-step. Now every endnode $v\in V_{\pend}$ tries to activate the next junction-node $\pnext(v)$ towards the root of $T$. This will be successful for at least every second endnode, since one junction-node is shared by at most two endnodes. Therefore at least $\frac{l_0}{2}$ insertions of an endnode-path $\path(v)$ is carried out at $\pnext(v)$ in each iteration and the number of end- and junction-nodes is reduced by at least $\frac{l_i}{2}$ in iteration $i$. Thus the number of iterations is bounded by $\lceil \log n \rceil$.
\end{proof}

\subsection{Sorting Edge-Permutations on Linear Trees}\label{subsec:sorting_edges}

This phase refers to step \ref{sort} of the sequential algorithm.
We are starting with two linear trees $L_1$ and $L_1'$ associated to the original tree $T_1$ and the balanced tree $T_1'$ with presorted edges. Now the sequence of nni-operations will be generated that transforms the sequence $e_1',e_2',\dots,e_{n-3}'$ of internal edges in $L_1$ into the linearized sorted sequence, say $e_1'',e_2'',\dots,e_{n-3}''$, of $L_1'$. 

The general approach of the sequential algorithm is first to transform adjacent edge-pairs by \nni-moves, such that afterwards the whole sequence is pairwise alternating from ascending to descending according to the sorting order of $e_1'',e_2'',\dots,e_{n-3}''$ (the ascending and descending subsequences of edges will be called \emph{blocks}).
Then, starting from the middle, we merge and pull out adjacent blocks via \nni-operations, finally resulting in a linear tree of blocks of doubled size, again alternating. At \emph{k}-th stage, we begin with $\frac{n}{2^{k}}$ blocks of $2^{k}$ internal edges each, resulting in $\frac{n}{2\cdot2^{k}}$ blocks consisting of $2\cdot2^{k}$ edges. See Figure \ref{fig_tree_merge} for an illustration. The sorting algorithm terminates if the resulting sequence consists of only one block, containing all edges. %, we are done with merge-sorting the linear tree.

\begin{figure}[!htb]
 \centering
 \subfloat[initially unsorted tree $L^{1}$ with $|B_i|=1, i=8$]{
\begin{tikzpicture}
[scale=0.89,
label distance=-5pt,
every node/.style={fill=black,circle,inner sep=2pt}]

\node (v1) at (0,0) {};
\node (v2) at (1,0) {};
\node (v3) at (2,0) {};
\node (v4) at (3,0) {};
\node (v5) at (4,0) {};
\node (v6) at (5,0) {};
\node (v7) at (6,0) {};
\node (v8) at (7,0) {};
\node (v9) at (8,0) {};

\foreach \a/\b/\c in {v1/v2/B^1_1, v2/v3/\dots, v3/v4/ , v4/v5/, v5/v6/, v6/v7/, v7/v8/\dots, v8/v9/B^1_8}
 \path (\a) edge node[above,fill=none,draw=none]{$\c$} (\b);

\end{tikzpicture}
}
\hspace*{1ex}
\subfloat[$L^{2}$ with pairwise alternating edge-weights]{
\begin{tikzpicture}
[scale=0.89,
label distance=-5pt,
every node/.style={fill=black,circle,inner sep=2pt},
every edge/.style={draw,densely dashed,->}]

\node (v1) at (0,0) {};
\node (v2) at (1,1) {};
\node (v3) at (2,0) {};
\node (v4) at (3,1) {};
\node (v5) at (4,0) {};
\node (v6) at (5,1) {};
\node (v7) at (6,0) {};
\node (v8) at (7,1) {};
\node (v9) at (8,0) {};

\foreach \a/\b/\c in {v1/v2/B^2_1, v2/v3/\dots, v3/v4/ , v4/v5/, v5/v6/, v6/v7/, v7/v8/\dots, v8/v9/B^2_8}
 \path (\a) edge node[above,sloped,fill=none,draw=none]{$\c$} (\b);

\end{tikzpicture}
}

\subfloat[$L^{3}$ after first merging-stage: $|B_{i}|=2, i=4$]{
\begin{tikzpicture}
[scale=0.89,
%label distance=-5pt,
every node/.style={fill=black,circle,inner sep=2pt},
every edge/.style={draw,densely dashed,->}]

\node (v1) at (0,0) {};
\node[gray] (v2) at (1,0.5) {};
\node (v3) at (2,1) {};
\node[gray] (v4) at (3,0.5) {};
\node (v5) at (4,0) {};
\node[gray] (v6) at (5,0.5) {};
\node (v7) at (6,1) {};
\node[gray] (v8) at (7,0.5) {};
\node (v9) at (8,0) {};

\foreach \a/\b/\c in {v1/v2/, v2/v3/, v3/v4/\dots, v4/v5/, v5/v6/, v6/v7/\dots, v7/v8/, v8/v9/}
 \path (\a) edge node[above,sloped,fill=none,draw=none]{$\c$} (\b);

\draw[decorate,decoration={brace,raise=5pt}] (v1) -- node[above=5pt,sloped,fill=none,draw=none]{$B^3_1$} (v3);
\draw[decorate,decoration={brace,raise=5pt}] (v7) -- node[above=5pt,sloped,fill=none,draw=none]{$B^3_4$} (v9);

\end{tikzpicture}
}
\hspace*{2ex}
\subfloat[$L^{4}$ after second merging-stage: $|B_{i}|=4, i=2$]{
\begin{tikzpicture}
[scale=0.89,
label distance=-5pt,
every node/.style={fill=black,circle,inner sep=2pt},
every edge/.style={draw,densely dashed,->}]

\node (v1) at (0,0) {};
\node[gray] (v2) at (1,0.25) {};
\node[gray] (v3) at (2,0.5) {};
\node[gray] (v4) at (3,0.75) {};
\node (v5) at (4,1) {};
\node[gray] (v6) at (5,0.75) {};
\node[gray] (v7) at (6,0.5) {};
\node[gray] (v8) at (7,0.25) {};
\node (v9) at (8,0) {};

\foreach \a/\b in {v1/v2, v2/v3, v3/v4, v4/v5, v5/v6, v6/v7, v7/v8, v8/v9}
 \path (\a) edge (\b);

\draw[decorate,decoration={brace,raise=5pt}] (v1) -- node[above=5pt,sloped,fill=none,draw=none]{$B^4_1$} (v5);
\draw[decorate,decoration={brace,raise=5pt}] (v5) -- node[above=5pt,sloped,fill=none,draw=none]{$B^4_2$} (v9);

\end{tikzpicture}
}
 \caption{Sorting edges on a linear tree $L$ via merging and pulling out alternating sequences of edge-weights $B_i$. Note, that the length $|B_i|$ of the sorted sequences doubles in every merging-stage.}
 \label{fig_tree_merge}
\end{figure}
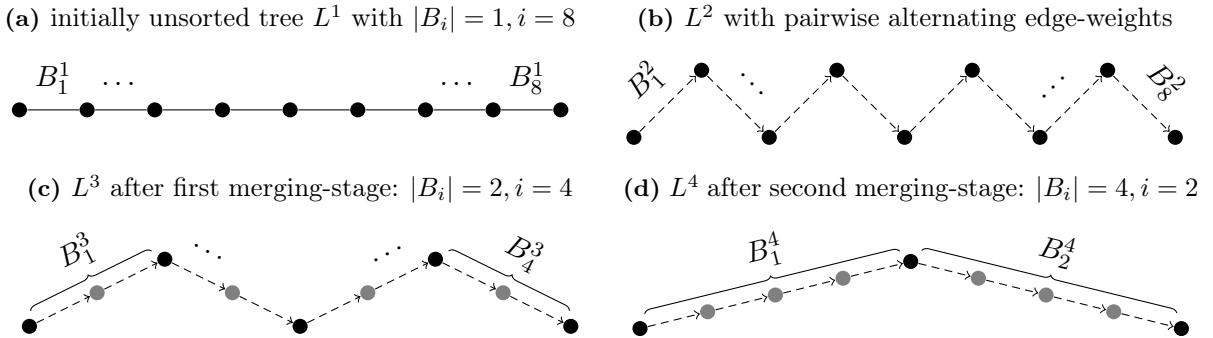

\paragraph{Parallel Tree Merging}

Now we describe an efficient parallel algorithm for sorting the edge permutations. We will not only consider the two adjacent blocks in the middle for comparing and merging, but all the $\frac{n}{2^{k}}$ block-pairs that will be adjacent in the course of stage $k$ in parallel. So we have to describe the pairing of blocks and edges inside blocks for each stage in order to allow for parallel computation.

At stage $k$ let $B_{1},B_{2},\dots,B_{\frac{n}{2^{k}}}$ be the blocks appearing in that order on the linear tree. We start pairing recursively from the middle, such that $B_{l}$ pairs with $B_{\frac{n}{2^{k}}-(l-1)}$ for $l\in\{1,\dots,\frac{n}{2\cdot2^{k}}\}$. Furthermore, let $e_{(l-1)2^{k}},e_{(l-1)2^{k}+1},\dots,e_{l2^{k}}$ be the edges of block $B_{l}$ at stage $k$.

To preserve simplicity, we illustrate the merging of edges of two blocks within a pair $(B^{k}_{x},B^{k}_{y})$, which is said to be a block-pair to get adjacent and to be merged at stage $k$ within the linear tree $L^{k}$.
Let $e_i^{k}$ denote the edge at position $i$ in $L^{k}$. The new position of this edge within $L^{k+1}$ is denoted by $e^{k+1}_{i+p}$, where $p$ is the \emph{rank} (regarding its edge-weight compared and ranked with the edge-weights of the opposite block) of $e^{k}$ in the opposite block of the merging-stage plus the number of equally ranked edges positioned before $e^{k}$ within the same block.

So if $e^{k}\in B^{k}_{x}$ is at position $i$ in $L^k$, we have $p =\rank(e^{k}_{i}|B^{k}_{y})+\big|\{e^{k}_{j}\in B^{k}_{x}\,|\,j<i, \rank(e^{k}_{j})=\rank(e^{k}_{i})\}\big|$ and the position changes from $e^{k}_{i}\rightsquigarrow e^{k+1}_{i+p}$ in $L^{k+1}$, as shown in Figure \ref{fig_block_merge}. We compute the ranking and positioning for all internal edges of the block-pair $(B^{k}_{x},B^{k}_{y})$ in parallel.

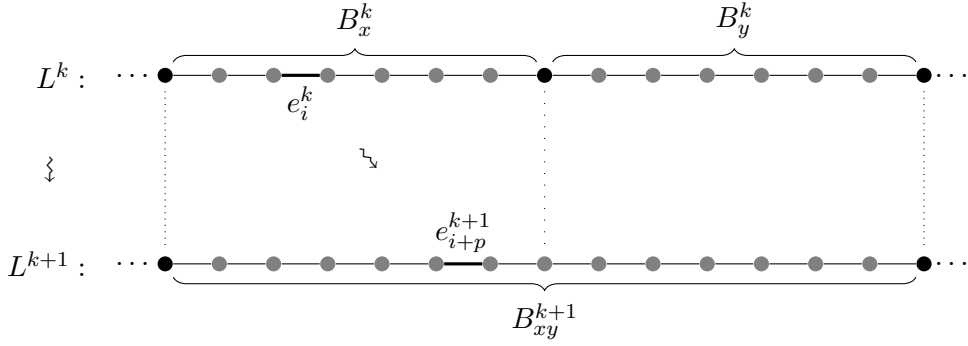
\begin{figure}[!htb]
 \centering
 \usetikzlibrary{chains,positioning,decorations.pathreplacing}

\begin{tikzpicture}[
vertex/.style={fill=gray,circle, inner sep=2pt},
leaf/.style={fill=black,circle, inner sep=2pt},
%every join/.style={black,->},
start chain=1,
start chain=2,
node distance=.5cm,
]

%chain one
\foreach \a in {leaf, vertex, vertex, vertex, vertex, vertex, vertex, leaf, vertex, vertex, vertex, vertex, vertex, vertex, leaf}
 \node[\a,on chain=1, join]{};

\node[] (Lk) [left=of 1-1,xshift=-.3cm] {$L^k:$};
\node[]  [left=of 1-1,xshift=.6cm] {$\dots$};
\node[]  [right=of 1-15,xshift=-.6cm] {$\dots$};

\draw[very thick] (1-3) -- node[midway,sloped,below,draw=none] (ek1) { $e^k_i$} (1-4);

\draw[thin,decorate,decoration={brace,raise=5pt,amplitude=5pt}] (1-1) -- (1-8) node[above=11pt,midway,draw=none]{$B^k_x$};
\draw[thin,decorate,decoration={brace,raise=5pt,amplitude=5pt}] (1-8) -- (1-15) node[above=10pt,midway,draw=none]{$B^k_{y}$};

%chain two
\begin{scope}[yshift=-2.5cm]
\foreach \a in {leaf, vertex, vertex, vertex, vertex, vertex, vertex, vertex, vertex, vertex, vertex, vertex, vertex, vertex, leaf}
 \node[\a,on chain=2, join]{};

\node[] (Lk2) [left=of 2-1,xshift=-.3cm] {$L^{k+1}:$};
\node[]  [left=of 2-1,xshift=.6cm] {$\dots$};
\node[]  [right=of 2-15,xshift=-.6cm] {$\dots$};

\draw[very thick] (2-6) -- node[midway,sloped,above,draw=none] (ek2) { $e^{k+1}_{i+p}$} (2-7);

\draw[thin,decorate,decoration={brace,raise=5pt,amplitude=5pt,mirror}] (2-1) -- (2-15) node[below=11pt,midway,draw=none]{$B^{k+1}_{xy}$};
\end{scope}

\draw[dotted,shorten >= 3pt,shorten <= 3pt] (1-1) -- (2-1);
\draw[loosely dotted,shorten >= 3pt,shorten <= 3pt] (1-8) -- (2-8);
\draw[dotted,shorten >= 3pt,shorten <= 3pt] (1-15) -- (2-15);

\draw[draw=none] (1-3) -- node[midway,sloped,above,draw=none] {$\rightsquigarrow$} (2-6);
\draw[draw=none] (1-1) -- node[midway,sloped,above,draw=none,yshift=-1.75cm] {$\rightsquigarrow$} (2-1);

\end{tikzpicture}
 \caption{Ranking edges within a block-pair $(B^{k}_{x},B^{k}_{y})$ on the linear tree $L^{k}$, resulting in $L^{k+1}$ with doubled block-size at the combined block $B^{k+1}_{xy}$.}
 \label{fig_block_merge}
\end{figure}

Furthermore, this sorting procedure is performed in parallel for all block-pairs which get adjacent in stage $k$ on $L^k$ with total number of $\frac{n}{2^k}\cdot 2^k = n$ processors running in $\O(1)$ time. In order to compute the sequence of \nni-operations, needed for the transformation of $L^{k} \rightsquigarrow L^{k+1}$ we look at both the block-pair $(B^{k}_{x},B^{k}_{y})$ and the combined block $B^{k+1}_{xy}$. The sequence of internal edges $e^{k+1}_{1},\dots,e^{k+1}_{2^{k+1}}$ of $B^{k+1}_{xy}$ yields the sequence of \emph{operating} edges. To complete the \nni-triplet, we find in parallel for every edge $e^{k+1}_i$ the next edge from the opposite block with respect to the situation on $L^{k}$ appearing in the sequence.
The \nni-triplet is generated via the edge of the 'outer' leaf of $e^{k+1}_{i}$ and the first $e^{k+1}_{j}$ from the opposite block.

The actual pairing situation if the \nni-operations would be performed sequentially on $L^{k}$ is shown in Figure \ref{fig_block_merge2}.

\begin{figure}[!htb]
 \centering
 \usetikzlibrary{chains,positioning,decorations.pathreplacing,shapes}

\begin{tikzpicture}[
vertex/.style={fill=gray,circle, inner sep=2pt},
leaf/.style={fill=black,circle, inner sep=2pt},
%every join/.style={black,->},
start chain=1,
start chain=2,
node distance=.5cm,
]

%chain one
\foreach \a in {vertex, vertex, leaf, vertex, vertex}
 \node[\a,on chain=1, join]{};

\node[leaf] (vl) at (-3.5,0) {};
\draw (vl) --  +(0,-.7);
\draw (vl) --  +(-.7,0);
\node[] [right=of vl,xshift=-.6cm] {$\dots$};

\node[leaf] (vr) at (5.5,0) {};
\draw (vr) --  +(0,-.7);
\draw (vr) --  +(.7,0);
\node[] [left=of vr,xshift=.6cm] {$\dots$};

%\node[] (Lj) [left=of 1-1,xshift=-.3cm] {$L^k:$};
\node[] [left=of 1-1,xshift=.6cm] {$\dots$};
\node[] [right=of 1-5,xshift=-.6cm] {$\dots$};

\node[vertex,below=of 1-3,yshift=-10] (1) {};
\node[vertex,below=of 1] (2) {};
\node[rotate=90,below=of 2,yshift=5,xshift=7] (3) {$\dots$};
\draw (1-3) --  (1);
\draw (1) --  (2);
\draw (1) --  +(.7,0);
\draw (2) --  +(.7,0);
%\node[isosceles triangle,draw,rotate=90,minimum size=1cm,below right=of 1-2,xshift=-1.5cm] (tri) {};

\path (1-2) -- node[midway,above,draw=none] (ek1) { $e^{k}_{i}$} (1-3);
\path (1-3) -- node[midway,above,draw=none] (ek1) { $e^{k}_{j}$} (1-4);
\draw (1-2) -- node[midway,left,draw=none] (ek1) { $l^{k}_{i}$} +(0,-.7);
\draw (1-4) -- node[midway,right,draw=none] (ek1) { $l^{k}_{j}$} +(0,-.7);
\draw (1-1) --  +(0,-.7);
\draw (1-5) --  +(0,-.7);

\draw[thin,decorate,decoration={brace,raise=20pt,amplitude=5pt}] (-4,0) -- (1-3) node[above=26pt,midway,draw=none]{$B^k_x$};
\draw[thin,decorate,decoration={brace,raise=20pt,amplitude=5pt}] (1-3) -- (6,0) node[above=25pt,midway,draw=none]{$B^k_{y}$};

\draw[thick,dotted,-stealth] (-4,-1) .. controls (.75,-1) .. (.75,-2.5);
\draw[thick,dotted,-stealth] (6,-1) .. controls (2.5,-1) .. (2.5,-2.5);

% \node[rounded rectangle,draw,dotted,text justified] () at (1,-3.5) {\textbullet \hspace{1ex}if $w(e^{k}_{i}) < w(e^{k}_{i'})$ and \tikz\draw[solid] (0,0) -- node[midway,left,yshift=5] {$B^{k}_{l}$} (.5,.5) -- node[midway,right,yshift=5] {$B^{k}_{l+1}$} (1,0);\newline next nni op $=(e^{k}_{i},e^{k}_{i'},b)$};
% \node[rounded rectangle,draw,dotted,text justified] () at (1,-5) {\textbullet \hspace{1ex}if $w(e^{k}_{i}) > w(e^{k}_{i'})$ and \tikz\draw[solid] (0,.5) -- node[midway,left] {$B^{k}_{l}$} (.5,0) -- node[midway,right] {$B^{k}_{l+1}$} (1,.5);\newline next nni op $=(a,e^{k}_{i},e^{k}_{i'})$};

\end{tikzpicture}
 \caption{Merging edges of a block-pair $(B^{k}_{x},B^{k}_{y})$ via \nni-operations.}
 \label{fig_block_merge2}
\end{figure}
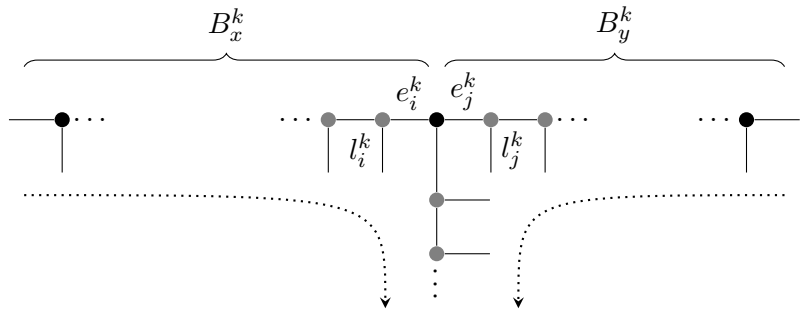

After pairing up every edge in the sequence we have:
\begin{itemize}
 \item if $\wt(e^{k}_{i}) < \wt(e^{k}_{j})$ and \tikz[baseline=0pt]\draw[solid] (0,0) -- node[midway,left=5pt] {$B^{k}_{x}$} (.5,.5) -- node[midway,right=5pt] {$B^{k}_{y}$} (1,0); holds, the next \nni-operation is $\nni(e^{k}_{i},e^{k}_{j},l^{k}_{j})$
 \item if $\wt(e^{k}_{i}) > \wt(e^{k}_{j})$ and \tikz[baseline=0pt]\draw[solid] (0,.5) -- node[midway,left=5pt] {$B^{k}_{x}$} (.5,0) -- node[midway,right=5pt] {$B^{k}_{y}$} (1,.5); holds, the next \nni-operation is $\nni(l^{k}_{i},e^{k}_{i},e^{k}_{j})$
\end{itemize}

We are now ready to state Algorithm \ref{TMS} to compute the sequence of \nni-operations used for merge-sorting a linear tree $L$.

\begin{algorithm}[!htb]
\KwIn{Linear tree $L$, permutation $e_1,e_2,\dots,e_{n-3}$ of internal edges of $L$.}
\KwOut{Sequence ${\NNI}$ of nni-operations that transforms $L$ into $L'$ with internal edges sorted.}
\BlankLine
 \For{$k=1$ \KwTo $\log n$}{
  \ForEach{$l\in\{1,\dots,\frac{n}{2\cdot2^{k}}\}$ \bf parallel}{
   $B_x := B_l$\;
   $B_y := B_{\frac{n}{2^{k}}-(l-1)}$\;
   $B_{xy} := \merge(B_{x}, B_{y})$\tcc*{Merging two blocks via \emph{ranking} edges}
   $L^k := B_{xy} \circ L^k$\tcc*{at the end of the foreach-Phase in the $k$-th iteration, $L^k=e^{k}_1,\dots,e^{k}_{n-3}$} 
  }
  \ForEach{$e^{k}_{i}\in L^{k}$ \bf parallel}{
  $e^{k}_{j} :=$ next edge from opposite block\;
  \If{$\wt(e^{k}_{i}) < \wt(e^{k}_{j})$ \bf and \tikz[baseline=0pt]\draw[solid] (0,0) -- node[midway,left=5pt] {$B^{k}_{x}$} (.5,.5) -- node[midway,right=5pt] {$B^{k}_{y}$} (1,0);}{
   $\nni(i) := (e^{k}_{i},e^{k}_{j},l^{k}_{j})$\tcc*{as illustrated in Figure \ref{fig_block_merge2}} 
  }
  \ElseIf{$\wt(e^{k}_{i}) > \wt(e^{k}_{j})$ \bf and \tikz[baseline=0pt]\draw[solid] (0,.5) -- node[midway,left=5pt] {$B^{k}_{x}$} (.5,0) -- node[midway,right=5pt] {$B^{k}_{y}$} (1,.5);}{
   $\nni(i) := (l^{k}_{i},e^{k}_{i},e^{k}_{j})$\;
  }
  $\NNI := \NNI \circ \nni(i)$\;
  }
 }
\caption{\textsc{Tree\_Merge\_Sort}\label{TMS}}
\end{algorithm}

We obtain the following Lemma:

\begin{Lem}\label{lem:edge_sort}
 The sorting of edge-permutations is performed in $\O(\log n)$ time on $n$ processors.
\end{Lem}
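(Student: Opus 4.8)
The plan is to bound the cost of Algorithm~\ref{TMS} stage by stage. Its outer \texttt{for}-loop runs for $k=1,\dots,\log n$, so there are $\O(\log n)$ stages, each depending on the previous one; hence it suffices to show that a single stage is executed in $\O(1)$ time on at most $n$ processors, and that this pool of processors can be handed over unchanged to the next stage.

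Fix a stage $k$. At its start $L^{k}$ carries the blocks $B_{1},\dots,B_{n/2^{k}}$, each of size $2^{k}$, and the stage pairs them as $(B_{l},B_{n/2^{k}-(l-1)})$ for $l\in\{1,\dots,n/2^{k+1}\}$; since every one of the $n-3$ internal edges lies in exactly one such pair, I would assign one processor per internal edge and thus use at most $n$ processors. For the first parallel loop (the merge) I would exploit that each block is by construction a monotone run of edge-weights, so that for $e^{k}_{i}\in B^{k}_{x}$ the map $i\mapsto\rank(e^{k}_{i}\mid B^{k}_{y})$ is monotone and the in-block correction term $\big|\{e^{k}_{j}\in B^{k}_{x}\,:\,j<i,\ \rank(e^{k}_{j})=\rank(e^{k}_{i})\}\big|$ depends only on the run of equal weights containing $e^{k}_{i}$. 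The processor owning $e^{k}_{i}$ then determines its rank in the opposite \emph{sorted} block $B^{k}_{y}$ (breaking ties by position, which the CRCW write-conflict resolution handles), writes the target index $i+p$, and records the combined block $B^{k+1}_{xy}$, all in $\O(1)$ time. For the second parallel loop, the processor of each operating edge $e^{k+1}_{i}$ of $B^{k+1}_{xy}$ looks up the first edge $e^{k}_{j}$ of the opposite block following it in the order on $L^{k}$ --- again an $\O(1)$ lookup by the same monotonicity --- and assembles $\nni(i)$ from the ``outer'' leaf $l^{k}_{i}$ and $e^{k}_{j}$ according to the two cases stated just before the algorithm. The updates $\NNI:=\NNI\circ\nni(i)$ add $\O(n)$ items in total and can be realized by writing each $\nni(i)$ into a preallocated slot and flattening the list once at the end, i.e.\ $\O(1)$ per edge.

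Multiplying out, $\O(\log n)$ stages, each costing $\O(1)$ time on at most $n$ processors that are reused across stages, gives $\O(\log n)$ time on $n$ processors, as claimed.

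I expect the genuinely delicate point to be precisely the $\O(1)$ bound for merging two length-$2^{k}$ blocks with only one processor per edge. The naive implementation --- each edge performing a binary search into the opposite block --- costs $\O(k)$ in stage $k$ and hence $\O(\log^{2} n)$ in total, which is too slow; the $\O(1)$ bound must genuinely use that \emph{both} blocks are already sorted, so that the cross-rank of an edge is pinned down by a single comparison against an index-determined boundary element of the opposite block rather than by a search, and that the only concurrent writes that occur are at locations targeted by $\O(1)$ processors, so that an arbitrary- or priority-CRCW-PRAM suffices. A secondary obstacle is keeping the auxiliary data --- the owner leaf $l^{k}_{i}$ of each edge, the sibling pointers, and the ``next edge of the opposite block'' pointers --- consistent and $\O(1)$-accessible after each relabeling $L^{k}\rightsquigarrow L^{k+1}$; this is routine pointer updating in the spirit of Algorithm~\ref{PLT}, but must be stated carefully to stay within the constant-time-per-stage budget.
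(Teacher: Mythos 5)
Your proposal is correct and follows essentially the same route as the paper's own proof: $\log n$ doubling stages, one processor per internal edge, and $\O(1)$ time per stage both for the block merge and for generating the \nni-triplets. The delicate point you flag --- how a single processor per edge determines its cross-rank in the opposite sorted block in constant time --- is not addressed in the paper either; its proof simply asserts that stage $k$ uses $\frac{n}{2^k}\cdot 2^k = n$ comparisons in $\O(1)$ time, so your discussion is, if anything, more explicit about the same gap rather than a departure from the published argument.
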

\begin{proof}
 In Algorithm \ref{TMS} the length of the sorted sub-sequences $|B_l|$ doubles with every merging-stage. Therefore at most
 $\log n$ complete merging-rounds are needed to yield a sorted sequence of length $n$. In the merging-steps of stage $k$, we have $\frac{n}{2^k}$ blocks of length $2^k$ which are compared and merged to blocks of doubled size using $\frac{n}{2^k}\cdot 2^k = n$ comparisons, i.e. allocating $n$ processors and yielding a running time of $\O(1)$. The subsequent generation of the \nni-triplets also uses $n$ processors for $\O(1)$ time per stage.% \dots to sort the whole sequence.
\end{proof}

\subsection{Sorting Leaf-Permutations on Balanced Binary Trees}\label{subsec:sorting_leaves}

This phase refers to step \ref{leaf} of the sequential algorithm.
We are given two binary balanced trees $T_1', T_2'$ that only differ in the ordering of leaves. The sequential algorithm generates a sequence $\NNI$ of \nni-operations which implement the cycles of the permutation of leaves transforming $T_1'$ into $T_2'$. We show how to generate this sequence efficiently in parallel.

Let $d$ be the \emph{depth} of $T_1'$ and $T_2'$. When $T_1'$ is transformed into $T_2'$ by use of the sequence $\NNI$, the corresponding intermediate trees might be unbalanced. More precisely, let $\pi:\{1,\dots,n\}\to\{1,\dots,n\}$ be the permutation transforming the order of leaves $l_1,\dots,l_n$ in $T_1'$ into $l_{\pi(1)},\dots,l_{\pi(n)}$ in $T_2'$. Let $\pi$ consist of cycles $C_{1},\dots,C_{k}$. Then $\NNI=\NNI_{1}\circ\dots\circ\NNI_{k}$ where $\NNI_{i}$ implements cycle $C_{i}$. Let $C_{i}=(c_{i,1},\dots,c_{i,q})$ be one cycle, then $\NNI_{i}=\NNI_{i,1}\circ\dots\circ\NNI_{i,q}$, where $\NNI_{i,j}$ is a sequence of \nni-operations which transports the leaf $l_{c_{i,j}}$ to its new position in $T_2'$ (cf. Figure \ref{fig_leaf_sort}).

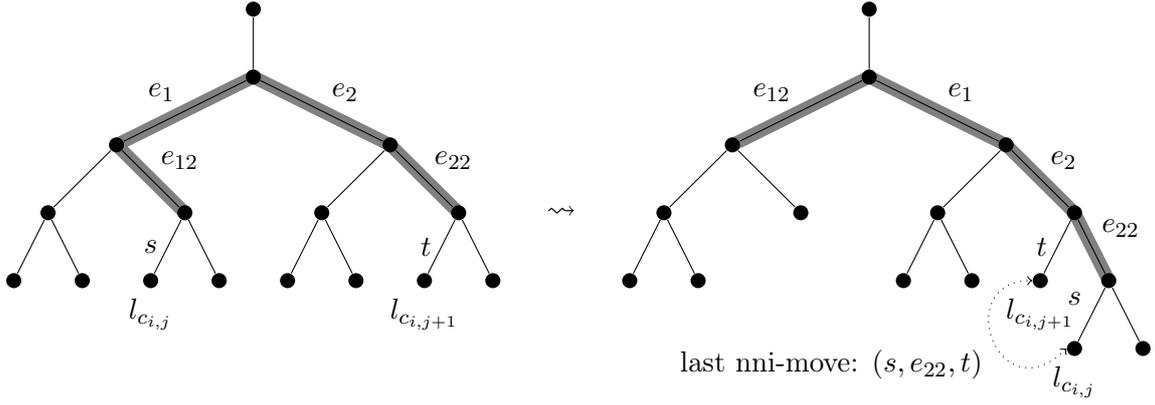
\begin{figure}[!htb]
 \centering
 \usetikzlibrary{patterns}

\pgfdeclarelayer{background layer}
\pgfdeclarelayer{foreground layer}
\pgfsetlayers{background layer,main,foreground layer}

\begin{tikzpicture}
[scale=0.9,
level distance=10mm,
every label/.style={fill=none,rectangle},
vertex/.style={fill=black,circle,inner sep=2pt},
level 0/.style={level distance=10mm,sibling distance=20mm},
level 1/.style={level distance=10mm,sibling distance=80mm},
level 2/.style={level distance=10mm,sibling distance=40mm},
level 3/.style={level distance=10mm,sibling distance=20mm},
level 4/.style={level distance=10mm,sibling distance=10mm}
]

\node[vertex] (r) {}
 child {node[vertex] (v) {}
  child {node[vertex] (v1) {}
   child {node[vertex] (v11) {}
    child {node[vertex] (v111) {}}
    child {node[vertex] (v112) {}}
   }
   child {node[vertex] (v12) {}
    child {node[vertex,label=below:$l_{c_{i,j}}$] (v121) {}
    edge from parent node [left] () {$s$}}
    child {node[vertex] (v122) {}}
   edge from parent node [above right] () {$e_{12}$} }
   edge from parent node [above left] () {$e_{1}$} }
  child {node[vertex] (v2) {}
   child {node[vertex] (v21) {} 
    child {node[vertex] (v211) {}}
    child {node[vertex] (v212) {}}
   }
   child {node[vertex] (v22) {}
    child {node[vertex,label=below:$l_{c_{i,j+1}}$] (v221) {}
    edge from parent node [left] () {$t$}}
    child {node[vertex] (v222) {}}
  edge from parent node [above right] () {$e_{22}$} }
  edge from parent node [above right] () {$e_{2}$} }
 };
\draw[draw=none] (v22)+(2,0) -- node [] () {$\rightsquigarrow$} +(1,0);

\begin{pgfonlayer}{background layer}
\draw[gray,line width=5pt,line cap=rect] (v22)  -- (v2) -- (v) -- (v1) -- (v12);
\end{pgfonlayer}

\begin{scope}[xshift=9cm]
\node[vertex] (r) {}
 child {node[vertex] (v) {}
  child {node[vertex] (v1) {}
   child {node[vertex] (v11) {}
    child {node[vertex] (v111) {}}
    child {node[vertex] (v112) {}}
   }
   child {node[vertex] (v12) {} }
   edge from parent node [above left] () {$e_{12}$} }
  child {node[vertex] (v2) {}
   child {node[vertex] (v21) {} 
    child {node[vertex] (v211) {}}
    child {node[vertex] (v212) {}}
   }
   child {node[vertex] (v22) {}
    child {node[vertex,label=below:$l_{c_{i,j+1}}$] (v221) {}
     edge from parent node [left] () {$t$}}
    child {node[vertex] (v222) {}
     child {node[vertex,label=below:$l_{c_{i,j}}$] (v2221) {}
      edge from parent node [above left] () {$s$}}
     child {node[vertex] (v2222) {}}
   edge from parent node [above right] () {$e_{22}$}}
  edge from parent node [above right] () {$e_{2}$} }
  edge from parent node [above right] () {$e_{1}$} }
 };

\begin{pgfonlayer}{background layer}
\draw[gray,line width=5pt,line cap=rect] (v222)  -- (v22) -- (v2) -- (v) -- (v1);
\draw[<->,dotted] (v2221.west) .. controls +(-1,-1) and +(-1.2,0) .. node [below left] {last nni-move: $(s,e_{22},t)$} (v221.west);
\end{pgfonlayer}
\end{scope}

\end{tikzpicture}
 \caption{Transportation of leaf $l_{c_{i,j}}$ at position $s$ to its target position $t$ with leaf $l_{c_{i,j+1}}$ attached.}
 \label{fig_leaf_sort}
\end{figure}

Let $T_{\H}$ denote the tree that results from applying sequence $\H$ of \nni-operations to the tree $T_1'$. For each prefix $\H$ of $\NNI$, the tree $T_{\H}$ has depth $d$ or $d+1$, hence the set of possible positions of edges in $T_{\H}$ is $P=\{(l,j)\,|\, 1\leq l \leq d+1, 1 \leq j \leq 2^l\}$. Then each of the trees $T_{\NNI_1 \circ\dots\circ \NNI_{j}}$ differs from $T_1'$ only w.r.t. the order (positions) of leaves, i.e. all the internal edges have the same position as in $T_1'$. Furthermore each $T_{\NNI_1 \circ\dots\circ \NNI_j \circ \NNI_{j+1,1} \circ\dots\circ \NNI_{j+1,h}}$ is one of the imbalanced trees $T_{s,t}$ of depth $d+1$ with $s,t\in P$ positions of depth $d-1$ and $d+1$ respectively (cf. Figure \ref{fig_leaf_sort2}).

\begin{figure}[!htb]
 \centering
 \usetikzlibrary{patterns,decorations.pathmorphing,decorations.pathreplacing}

\pgfdeclarelayer{background layer}
\pgfdeclarelayer{foreground layer}
\pgfsetlayers{background layer,main,foreground layer}
%,pin={above left:{$head(v)$}}
\begin{tikzpicture}
[scale=0.9,
level distance=10mm,
every pin/.style={gray,pin distance=15mm},
every pin edge/.style={<-,gray,decorate,decoration={snake,segment length=6pt,pre length=4pt,amplitude=1pt}},
every label/.style={fill=none,rectangle},
vertex/.style={fill=black,circle,inner sep=2pt},
level 0/.style={level distance=10mm,sibling distance=20mm},
level 1/.style={level distance=10mm,sibling distance=80mm},
level 2/.style={level distance=10mm,sibling distance=40mm},
level 3/.style={level distance=10mm,sibling distance=30mm},
level 4/.style={level distance=10mm,sibling distance=20mm},
level 5/.style={level distance=10mm,sibling distance=10mm}
]

\node[vertex] (r) {}
 child {node[vertex] (v) {}
  child {node[vertex] (x) {}
  child {node[vertex] (v1) {}
   child {node[vertex] (v11) {}
    child {node[vertex] (v111) {}}
    child {node[vertex] (v112) {}}
   }
   child {node[vertex] (v12) {} }
   edge from parent [draw=none] node [sloped] () {$\dots$} }}
  child {node[vertex] (y) {}
  child {node[vertex] (v2) {}
   child {node[vertex] (v21) {} 
    child {node[vertex] (v211) {}}
    child {node[vertex] (v212) {}}
   }
   child {node[vertex] (v22) {}
    child {node[vertex] (v221) {}}
    child {node[vertex] (v222) {}
     child {node[vertex] (v2221) {}}
     child {node[vertex] (v2222) {}}
    }
   }
  edge from parent[draw=none] node [sloped] () {$\dots$} }
  }
 };

\node () at (-4,0) {$T_{s,t}:$};

\node [label=right:$0$] (0) at (5,-.5) {};
\node [label=right:$1$] (1) at (5,-1.5) {};
\node [label=right:$d-1$] (2) at (5,-3.5) {};
\node [label=right:$d$] (3) at (5,-4.5) {};
\node [label=right:$d+1$] (4) at (5,-5.5) {};

\begin{pgfonlayer}{background layer}
\draw[gray,line width=5pt,line cap=rect] (v2221) -- node [left,black] () {$t$} (v222) (v1) -- node [above right,black] () {$s$} (v12);
\draw[loosely dotted] (0) -- +(-9.5,0);
\draw[loosely dotted] (1) -- +(-9.5,0);
\draw[loosely dotted] (2) -- +(-9.5,0);
\draw[loosely dotted] (3) -- +(-9.5,0);
\draw[loosely dotted] (4) -- +(-9.5,0);\draw[thin,decorate,decoration={brace,raise=50pt,amplitude=5pt}] (0.north) -- (4.south) node[right=55pt,midway,draw=none]{depth};
\end{pgfonlayer}

\end{tikzpicture}
 \caption{Unbalanced tree $T_{s,t}$.}
 \label{fig_leaf_sort2}
\end{figure}
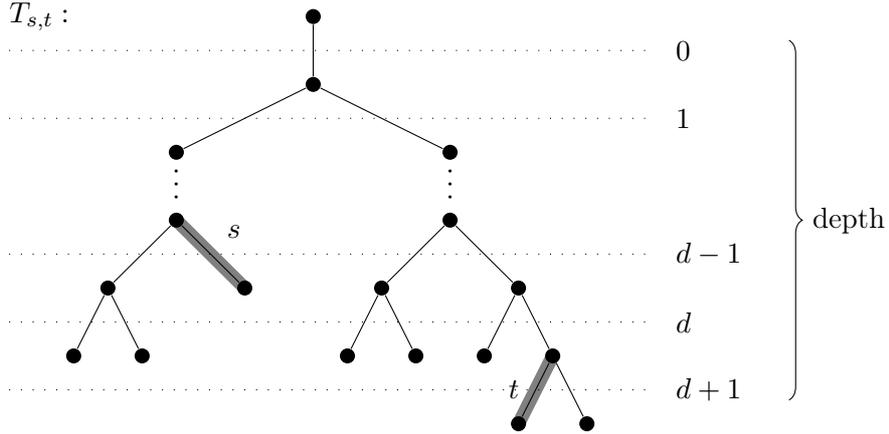

The positions of internal edges in $T_{s,t}$ only depend on $s$ and $t$:
if internal edge $e$ has position $(l,p)$ in $T_1'$, then its position in $T_{s,t}$ is one of $\big[(l,p), (l-q,\lfloor\frac{p}{2}\rfloor), (l+1,2p), (l+1,2p+1)\big]$ depending on if the edge $e$ is on the path from $s$ to $t$ and if it is on the ascending or descending part of this path.
Hence for each prefix $\H$ of $\NNI$ of the form $\H=\NNI_1 \circ\dots\circ \NNI_j \circ \NNI_{j+1,1} \circ\dots\circ \NNI_{j+1,h}$ the positions of edges $p_{\H}:E\to P$ in the tree $T_{\H}'$ which results from $T_1'$ by application of $\H$ can be computed efficiently in parallel.

\begin{Lem}\label{lem:leaf_sort}
 The sorting of leaf-permutations on two binary balanced trees can be done in time $\O(\log n)$ on $n$ processors. 
\end{Lem}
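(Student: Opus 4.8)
The plan is to show that the leaf-permutation sorting algorithm decomposes naturally into $O(\log n)$ parallel rounds, each executable in $O(1)$ time on $O(n)$ processors, by exploiting the structural observations already established in the text. First I would make precise the decomposition of the permutation $\pi$ into cycles $C_1,\dots,C_k$ and of each cycle into leaf-transportation subsequences $\NNI_{i,j}$. The key point is that cycle decomposition of a permutation on $n$ elements can itself be computed in parallel in $O(\log n)$ time on $O(n)$ processors by pointer-jumping along the functional graph of $\pi$ (each element points to $\pi(i)$), which also yields, for each element, its rank within its cycle and a canonical cycle representative. I would then observe that the transportation sequences $\NNI_{i,j}$ for different cycles and, within a cycle, for the successive leaves are \emph{independent} in the sense that they touch disjoint sets of leaf-positions up to the shared internal backbone; hence the positions $p_{\H}$ of all internal edges after any prefix $\H$ are determined solely by the pair $(s,t)$ of source and target depth-$(d-1)$/depth-$(d+1)$ positions, as already recorded in the excerpt.

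The next step is to turn the explicit position-update rule for internal edges in $T_{s,t}$ — namely that an edge at position $(l,p)$ in $T_1'$ moves to one of $(l,p)$, $(l-1,\lfloor p/2\rfloor)$, $(l+1,2p)$, $(l+1,2p+1)$ depending on whether it lies on the ascending or descending part of the path from $s$ to $t$ — into an $O(1)$-time per-processor computation. Since there are only $O(n)$ internal edges and $O(n)$ positions in $P=\{(l,j)\mid 1\le l\le d+1,\ 1\le j\le 2^l\}$ with $d=O(\log n)$, assigning one processor per edge lets each processor determine in constant time which of the four cases applies (this is a test of ancestor/descendant relationships between the edge's position and $s,t$, computable from the binary encodings of $p$, $s$, $t$). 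Each transportation step $\NNI_{i,j}$ consists of exactly the $O(d)=O(\log n)$ \nni-moves along the path from $s$ to $t$; I would generate the \nni-triplets for all these moves in parallel, one processor per position on the path, reading off the operating edge and the two subtree-defining edges from the current configuration $T_{s,t}$. Finally, the full sequence $\NNI=\NNI_1\circ\dots\circ\NNI_k$ is assembled by concatenating these blocks in the order dictated by the cycle-rank information, which is a parallel prefix-sum (list-ranking) computation over the blocks, again $O(\log n)$ time on $O(n)$ processors.

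For the time bound I would argue as follows: the number of transportation steps summed over all cycles is $\sum_i |C_i| \le n$, but these do \emph{not} all run in parallel — within a single cycle the steps are sequential. This is the main obstacle: a single long cycle of length $\Theta(n)$ forces $\Theta(n)$ sequential transportation steps, which would blow the time budget. The resolution (which I would spell out carefully, as it is the crux) is that the steps $\NNI_{i,1},\dots,\NNI_{i,q}$ of one cycle need not be performed one after another on a genuinely evolving tree: because each step only permutes leaf labels and leaves the internal-edge positions of $T_1'$ invariant (modulo the transient $(s,t)$-imbalance during a step), the \emph{net} effect of the whole cycle is a fixed relabeling of leaves, and the \nni-sequences for distinct $j$ can be generated from the \emph{static} data $(c_{i,j}, c_{i,j+1})$ alone, in parallel over all $j$ and all $i$ simultaneously. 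Thus all $\le n$ transportation blocks are generated in parallel in $O(\log n)$ time — each block internally has $O(\log n)$ moves generated in $O(1)$ time by $O(\log n)$ processors, for a total of $O(n)$ processors — and the only inherently sequential part is the $O(\log n)$-depth list-ranking/concatenation that orders the output sequence. Combining this with Lemmas \ref{lem:linear_tree} and \ref{lem:edge_sort} for the other steps of Algorithm \ref{DSA} then yields the claimed $O(\log n)$ time on $O(n)$ processors; I would close by remarking that the correctness of the generated sequence follows because each block faithfully implements the corresponding sequential transportation step of DasGupta's algorithm, whose correctness is given by Theorem \ref{thm_main}.
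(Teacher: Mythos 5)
Your proposal follows essentially the same route as the paper's proof: the crux in both is that every intermediate tree is of the form $T_{s,t}$, so the edge positions and hence each transportation block $\NNI_{i,j}$ can be generated from the static data of the permutation alone (source and target positions, their lowest common ancestor, a path of $O(\log n)$ \nni-moves), allowing all blocks to be produced in parallel; the paper simply assigns one processor per block running in $\O(\log n)$ time, whereas you parallelize inside each block and then concatenate by prefix sums. The only slip is your processor count: up to $n$ blocks with $\O(\log n)$ processors each is $\O(n\log n)$ processors, not $\O(n)$, but this is repaired either by Brent-style rescheduling (total work $\O(n\log n)$, depth $\O(\log n)$) or by reverting to the paper's one-processor-per-block scheme.
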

\begin{proof}
 Since the height of balanced binary trees is bounded by $\lceil\log n\rceil$, the positions of edges $p_{\H}:E\to P$ in $T'_{\H}$ for a prefix $\H$ of $\NNI$ can be efficiently computed in time $\O(\log n)$ on $n$ processors.
Thus it remains to describe how to compute the sequence $\NNI_{j+1,h+1}$ for a given $\H=\NNI_1 \circ\dots\circ \NNI_j \circ \NNI_{j+1,1} \circ\dots\circ \NNI_{j+1,h}$ and $p_{\H}$ as above:

Let $C_{j+1}=(c_{j+1,1},\dots,c_{j+1,h+1},c_{j+1,h+2},\dots)$ be the $(j+1)$-th cycle of $\pi$ and let $T_{\H}=T_{r,s}$ and $T_{\H\circ\NNI_{j+1,h+1}}=T_{r,t}$ with $r=(d-1,\lfloor\frac{c_{j+1,1}}{2}\rfloor)$. If $s=(d+1,x)$ and $t=(d+1,y)$ with $x=2^{d'}\cdot\alpha-j_{x}$ and $y=2^{d'}\cdot\alpha-j_{y}$ with $j_{x},j_{y}\in\{0,\dots,2^{d'}-1\}$ then the lowest common ancestor is at position $(d',\alpha)$ (see Figure \ref{fig_leaf_sort3}) and $\NNI_{j+1,h+1}$ is a sequence of $2(d-d')-1$ \nni-operations. Finally, $\NNI_{j+1,h+1}$ can be constructed in $\O(\log n)$ on a single processor since $d'<d\leq \log n$ and the positions of edges in the tree $T_{\H}$ are known at that point.
\end{proof}

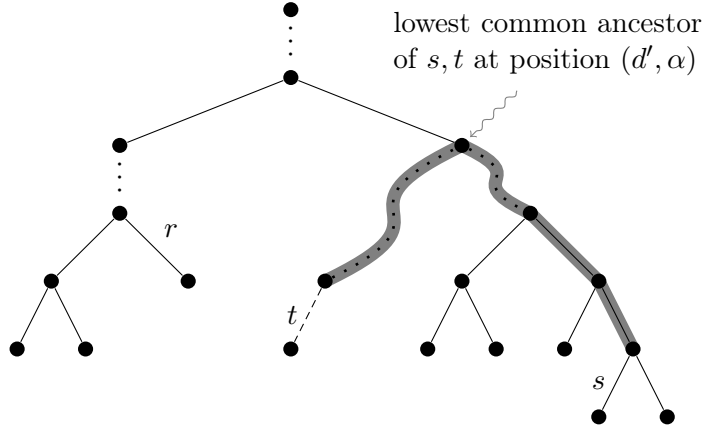
\begin{figure}[!htb]
 \centering
 \usetikzlibrary{patterns,decorations.pathmorphing,decorations.pathreplacing}

\pgfdeclarelayer{background layer}
\pgfdeclarelayer{foreground layer}
\pgfsetlayers{background layer,main,foreground layer}
\begin{tikzpicture}
[scale=0.9,
level distance=10mm,
every pin/.style={gray,pin distance=15mm},
every pin edge/.style={<-,gray,decorate,decoration={snake,segment length=6pt,pre length=4pt,amplitude=1pt}},
every label/.style={fill=none,rectangle},
vertex/.style={fill=black,circle,inner sep=2pt},
level 0/.style={level distance=10mm,sibling distance=20mm},
level 1/.style={level distance=10mm,sibling distance=80mm},
level 2/.style={level distance=10mm,sibling distance=50mm},
level 3/.style={level distance=10mm,sibling distance=20mm},
level 4/.style={level distance=10mm,sibling distance=20mm},
level 5/.style={level distance=10mm,sibling distance=10mm}
]

\node[vertex] (r) {}
 child {node[vertex] (v) {}
  child {node[vertex] (x) {}
  child {node[vertex] (v1) {}
   child {node[vertex] (v11) {}
    child {node[vertex] (v111) {}}
    child {node[vertex] (v112) {}}
   }
   child {node[vertex] (v12) {}
   edge from parent node [above right] () {$r$}  }
   edge from parent [draw=none] node [sloped] () {$\dots$} 
}}
  child {node[vertex] (y) {}
  child {node[draw=none] (z) {}
   edge from parent[draw=none] node [sloped] () {}
}
  child {node[vertex] (v2) {}
   child {node[vertex] (v21) {} 
    child {node[vertex] (v211) {}}
    child {node[vertex] (v212) {}}
   }
   child {node[vertex] (v22) {}
    child {node[vertex] (v221) {}}
    child {node[vertex] (v222) {}
     child {node[vertex] (v2221) {}
     edge from parent node [left] () {$s$}}
     child {node[vertex] (v2222) {}}
    }
   }
  edge from parent[draw=none] node [sloped] () {} }}
  edge from parent [draw=none] node [sloped] () {$\dots$} };

\node[vertex] (t1) at (0.5,-4) {};
\node[vertex] (t2) at (0,-5) {};
\draw[densely dashed] (t1) -- node [left] {$t$} (t2);

\node[text width=4.5cm] (text) at (4,-0.5) {lowest common ancestor of  $s,t$ at position $(d',\alpha)$};

\begin{pgfonlayer}{background layer}
\draw [preaction={draw,gray,line width=5pt,line cap=rect}]
            [draw,loosely dotted,line width=1pt] (t1) .. controls +(2,1) and +(-2,-1) ..  (y) .. controls +(1,-0.5) and +(-1,0.5).. (v2);
\draw[gray,line width=5pt,line cap=rect] (v2) -- (v22) -- (v222);
\draw[->,gray,shorten >=.5mm,decorate,decoration={snake,amplitude=1pt,segment length=6pt,pre=moveto,pre length=1mm,post length=1mm}] (text) -- (y);
\end{pgfonlayer}

\end{tikzpicture}
 \caption{Transportation-path between $s$ and $t$ via the lowest common ancestor in $T_{r,s}$.}
 \label{fig_leaf_sort3}
\end{figure}

This completes the last step of our parallel algorithm for approximating the \nni-distance between two weighted phylogenies and we get the following theorem as a corollary of Lemma \ref{lem:linear_tree}, \ref{lem:edge_sort} and \ref{lem:leaf_sort}.

\begin{Thm}
 The \nni-distance between two phylogenies $T_1$ and $T_2$ and the sequence of \nni-operations can be approximated within approximation ratio $\O(\log n)$ in $\O(\log n)$ time on $n$ processors.
\end{Thm}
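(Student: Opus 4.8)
The plan is to assemble the theorem directly from the three lemmas already proved, following the structure of DasGupta's sequential algorithm (Algorithm \ref{DSA}). Recall that under our standing assumption there are no good edge-pairs, so by Lemma \ref{lem_nni_1} we have $\opt = \dnni(T_1,T_2) \geq W$, where $W = \sum_{i=1}^{n-3} w_i$ is the sum of internal edge-weights. The strategy is: (i) run the parallel linearization procedure of Section \ref{subsec:linear_tree} four times (on $T_1$, $T_2$, $T_1'$, $T_2'$) to obtain $L_1, L_2, L_1', L_2'$; (ii) run the parallel edge-sort of Section \ref{subsec:sorting_edges} to merge-sort the internal edges of each $L_i$ into the presorted order of $L_i'$; (iii) run the parallel leaf-sort of Section \ref{subsec:sorting_leaves} to transform $T_1'$ into $T_2'$; (iv) concatenate the resulting \nni-sequences (with the appropriate reversals, exactly as in Algorithm \ref{DSA}) into a single sequence $\NNI$ transforming $T_1$ into $T_2$.

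For the \textbf{resource bound}, I would argue that each of the four linearizations runs in $\O(\log n)$ time on $n$ processors by Lemma \ref{lem:linear_tree}; the edge-sorts run in $\O(\log n)$ time on $n$ processors by Lemma \ref{lem:edge_sort}; the leaf-sort runs in $\O(\log n)$ time on $n$ processors by Lemma \ref{lem:leaf_sort}. Since these phases are performed one after another (a constant number of phases), the total time is $\O(\log n)$ and the processor count is $\O(n)$, which is the claimed bound. (If one insists on running the four linearizations or the edge-sorts of the two trees truly simultaneously, one uses $\O(n)$ processors still, since the trees have disjoint vertex sets; otherwise they are done sequentially in $\O(\log n)$ each, $\O(1)$ many of them.)

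For the \textbf{approximation ratio}, I would carry over DasGupta's analysis. The total cost of $\NNI$ is the sum of the costs of the four linearization-sequences, the four edge-merge-sorts, and the leaf-sort. In the linearization step, each endnode-path of total weight $w$ and length $k$ is inserted with $k$ \nni-moves whose operating edges are exactly the $k$ internal edges of that path; since each iteration halves the number of endnodes, every internal edge is an operating edge in at most $\O(\log n)$ iterations, so this contributes $\O(\log n)\cdot W$. In the edge-merge-sort, there are $\log n$ stages, and in each stage every internal edge serves as the operating edge of at most one \nni-move, so this contributes at most $\log n \cdot W$. The leaf-sort contributes a cost bounded by $\O(\log n)\cdot W$ as well (each leaf is transported over a path of length $\O(\log n)$ in the balanced tree, and the operating edges of those moves are internal edges of weight summing, over all leaves, to $\O(\log n)\cdot W$, using that the tree is balanced of depth $\O(\log n)$). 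Summing, $\wt(\NNI) = \O(\log n)\cdot W \leq \O(\log n)\cdot\opt$, which gives the approximation ratio $\O(\log n)$, and the theorem follows.

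\textbf{Main obstacle.} The delicate point is the cost accounting for the leaf-permutation step: one must verify that the intermediate trees $T_{s,t}$ stay balanced up to depth $d+1$ (which is guaranteed by the structure of $\NNI_{j+1,h+1}$ in Lemma \ref{lem:leaf_sort}), and then bound the total weight of operating edges summed over \emph{all} cycles of $\pi$ against $W$ up to a $\O(\log n)$ factor --- this is exactly the argument from \cite{DasGupta2000} that the balanced-tree leaf-sort costs $\O(\log n)\cdot W$, and it is the step that forces the balanced auxiliary tree in the first place. Beyond that, the only remaining care is bookkeeping: checking that the concatenation with reversals in Algorithm \ref{DSA} indeed produces a valid transformation $T_1 \rightsquigarrow T_2$, which is immediate since reversing an \nni-sequence inverts the transformation and each $\NNI_i$ transforms $T_i$ into $T_i'$.
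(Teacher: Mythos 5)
Your proposal matches the paper's own treatment: the paper obtains this theorem directly as a corollary of Lemmas \ref{lem:linear_tree}, \ref{lem:edge_sort} and \ref{lem:leaf_sort}, composing the parallel phases exactly as in Algorithm \ref{DSA} and inheriting the $\O(\log n)$ approximation ratio from DasGupta's analysis (via Lemma \ref{lem_nni_1} in the no-good-edge-pair case), which is precisely your argument. Your added cost-accounting sketch is consistent with, and somewhat more explicit than, what the paper writes.
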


In the last section, we present a parallel algorithm to compute \emph{good edge-pairs} in order to be able to split up large problem instances in a pre-processing step and to identify edges, for which no \nni-operation is needed in order to transform the trees into each other.

\subsection{Detecting Good Edge-Pairs}\label{subsec:good_edge_pairs}

Our aim is to identify \emph{good edge-pairs} $(e_x,e_y)$\footnote{not necessarily having $x=y$ in similar labeled edge-sets, with the set of edge-weights being a \emph{multiset}} with $\wt(e_x) = \wt(e_y)$, $e_x \in E_{T_1}$ and $e_y \in E_{T_2}$, which induce the same partition on the set of leaf-labels and edge-weights in their corresponding tree (cf. Definition \ref{def_gep}).

In \cite{DasGupta2000} this computational step is performed in $\O(n^{2})$ time which dominates the total running time of the original algorithm. In \cite{Hon2000,Hon2004} Hon et al. give an improved algorithm for computing good edge-pairs, whose running time is $\O(n\log n)$. In the following, we adopt the approach of Hon et al. and design an efficient parallel algorithm running in $\O(\log n)$ time on $\O(n\log n)$ processors. Let us first give an outline of the approach of Hon et al.

\paragraph{Partition-Labeling Problem}

In \cite{Hon2004} Hon et al. define a problem called the \emph{partition-labeling problem} on two rooted trees and present a solution running in $\O(n\log n)$ time. Then, the problem of computing good edge-pairs between two weighted phylogenies is reduced to this problem in time $\O(n\log n)$. Therefore, the time complexity of the original algorithm is improved from $\O(n^{2})$ to $\O(n\log n)$. A \emph{partition-labeling} between two rooted trees is defined as follows:

Let $R$ and $R'$ be two rooted trees with leaves labeled by the same multi-set $S$ of leaf labels. Let $A$ be any subset of $\delta(S)$, where $\delta(S)$ is the set of \emph{distinct} symbols or labels in $S$.
For each internal node $u\in V(R)$, $L_{R}(u)$ is defined as the multi-set of leaf labels in the subtree of $R$ rooted at $u$, and $L_{R}(u)|A$ to be the restriction of $L_{R}(u)$ to $A$.
Given $R$ and $R'$, let $V$ and $V'$ be the sets of internal nodes in $R$ and $R'$, respectively. A pair of mappings $\rho: V\to [1,\ell]$ and $\rho': V\to [1,\ell]$, $\ell=|V|+|V'|$, is called a \emph{partition-labeling} for $R$ and $R'$, if for all $u\in V$ and $v\in V'$, $\rho(u) = \rho'(v)$ if and only if $L_{R}(u)=L_{R'}(v)$.

The partition-labeling problem is to find a partition-labeling $(\rho,\rho')$ for $R$ and $R'$. A straightforward approach is to compute all multi-sets of $L_{R}(u)$ and $L_{R}(v)$, but this, similar to the approach of DasGupta et al., also takes $\O(n^{2})$ time. In order to reduce the time complexity, Hon et al. compute the multi-sets in an incremental manner and compare them based on earlier partial results. For this purpose, let $R_{A}$ be the contracted subtree of $R$ induced by $A$, containing only leaves with labels in $A$ and their \emph{common lowest ancestors}. Algorithm \ref{alg:Hon_etal} shows the framework of the method described by Hon et al. in \cite{Hon2004}.

\begin{algorithm}[!htb]
\KwIn{Two rooted trees $R, R'$ with leaves labeled by the same multi-set $S$.}
\KwOut{Partition-labeling $(\rho,\rho')$ for $R,R'$.}
\BlankLine
 \ForEach{$A_i\in\{A_1,A_2,\dots,A_{|\delta(S)|}\}$}{
  Compute partition-labeling for $R_{A_i}$ and $R'_{A_i}$\;
 }
 \For{$k=1$ \KwTo $\log n$}{
  Let $A_1,A_2,\dots$ be the labels considered in the last round\;
  Pair up $A_i$'s such that $A_{2j-1}=A_{2j-1}\cup A_{2j}$\;
  Delete all $A_j$'s and set $A_{2j-1}=:A_j$\;
  \ForEach{$A_j$}{
   Compute partition-labeling for $R_{A_j}$ and $R'_{A_j}$ based on the result of last round\;
  }
 }
\caption{\textsc{Partition\_Labeling}\label{alg:Hon_etal}}
\end{algorithm}

We will now show how the first foreach-phase can be efficiently computed in parallel. %, we have to prove a series of lemmas.
% First of all, we show that the induced subtree $R_A$ can be computed in time $\O(\log t)$ on $\O(t \log t)$ processors, where $t$ is the number of leaves in $R$ with labels in $A$.

\begin{Lem}\label{lem:GEPinduced}
 The induced subtree $R_A$ can be computed in time $\O(\log t)$ on $\O(t \log t)$ processors.
\end{Lem}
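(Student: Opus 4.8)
The plan is to compute $R_A$ — the subtree of $R$ induced by the leaves with labels in $A$ together with their pairwise lowest common ancestors — by three parallel sub-steps: (i) identify the relevant leaves, (ii) determine which internal nodes of $R$ survive in $R_A$, i.e.\ which are LCAs of two or more $A$-leaves, and (iii) wire up the parent pointers of $R_A$ by contracting away the non-surviving nodes. Throughout, $t=|A|$, so $R_A$ has $O(t)$ nodes, and we are allowed $O(t\log t)$ processors and $O(\log t)$ time.

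For step (i), mark in $O(1)$ time (one processor per leaf of $R$ with label in $A$) the set $\L_A$ of $A$-leaves; to respect the processor bound we assume the leaves carrying a given label are already accessible, as in Hon et al.'s framework. For step (ii), the key observation is that an internal node $u$ of $R$ lies in $R_A$ iff the subtree of $R$ rooted at $u$ contains $A$-leaves in at least two distinct child-subtrees of $u$. I would first compute, for every node $v$ of $R$, the count $c(v)=|L_R(v)\cap \L_A|$ (equivalently the nearest $A$-leaf in its subtree), which is a standard subtree-aggregation done by the Euler-tour / list-ranking technique in $O(\log t)$ time — here we pay the $\log t$ factor in both time and processors, using the Euler tour restricted to the $O(t)$-node "skeleton'' reachable from $A$-leaves (or, if one prefers to work on all of $R$, on $O(n)$ nodes, which still fits $O(t\log t)$ only if $n=O(t)$; the cleaner route is to first restrict attention to ancestors of $A$-leaves, of which there are $O(t\log t)$ when the trees are balanced, or to precompute ancestor lists). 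A node $u$ then survives iff at least two of its children $w$ have $c(w)\ge 1$; this is checked in $O(1)$ per node. The root is always included. For step (iii), each surviving node $u$ must point to its nearest strict ancestor that also survives; this is a "nearest marked ancestor'' query, which on the $O(t\log t)$ ancestor structure is again resolved by pointer jumping in $O(\log t)$ time, or more simply by noting that along any root-to-leaf path the surviving nodes appear in a fixed order and can be linked by list ranking.

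The main obstacle I expect is the processor budget, not the logic: naively the aggregations in steps (ii)–(iii) run over all of $R$, which has up to $n$ nodes, whereas we are only given $O(t\log t)$ processors. The crux is therefore to confine every computation to the set of ancestors of $\L_A$. Since the paper's trees of interest here are the balanced auxiliary trees $T_i'$ of depth $\lceil\log n\rceil$, each $A$-leaf has $O(\log t)$ ancestors, so $|{\rm anc}(\L_A)|=O(t\log t)$, matching the budget exactly; I would make this restriction explicit and perform the Euler tour, the subtree counts, and the nearest-marked-ancestor pointer jumping only on this induced ancestor set. Assembling: once the surviving nodes and their $R_A$-parent pointers are known, relabel them $1,\dots,O(t)$ by list ranking along the Euler tour, all within $O(\log t)$ time. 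Combining the three $O(\log t)$-time, $O(t\log t)$-processor sub-steps yields the claimed bound.
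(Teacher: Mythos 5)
There is a genuine gap, and it lies exactly where you located the ``main obstacle'': the processor budget. Your fix --- restricting all work to the ancestors of the $A$-leaves and arguing $|{\rm anc}(\mathcal{L}_A)|=O(t\log t)$ because ``the trees of interest here are the balanced auxiliary trees $T_i'$'' --- rests on a false premise. Lemma~\ref{lem:GEPinduced} is used in the good-edge-pair pipeline of Section~\ref{subsec:good_edge_pairs}, where $R$ and $R'$ are obtained from the \emph{original} phylogenies $T_1,T_2$ (rooted at a node next to a fixed leaf, with each internal edge subdivided and given a weight-encoding leaf). These trees are arbitrary and can have depth $\Theta(n)$ (e.g.\ caterpillars), so a single $A$-leaf may have $\Theta(n)$ ancestors and your ancestor set can be far larger than $O(t\log t)$. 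Moreover, even enumerating that ancestor set by climbing parent pointers takes time proportional to the depth, so neither the subtree counts $c(v)$ nor the ``nearest marked ancestor'' pointer jumping can be confined to the claimed time and processor bounds. Your structural characterization of the surviving nodes (at least two children whose subtrees contain $A$-leaves) is correct, but the algorithm built on it does not meet the stated complexity for the trees this lemma is actually applied to.

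The paper's proof avoids subtree aggregation over $R$ entirely. It performs a one-time global preprocessing --- the Schieber--Vishkin structure for $O(1)$-time $\LCA$ queries and an Euler-tour computation of preorder and postorder numbers --- whose $O(\log n)$-time, $O(n\log n)$-processor cost is amortized over all subsets $A$ handled by Algorithm~\ref{alg:Hon_etal}. Then, per subset $A$: the $A$-leaves $\ell_1,\dots,\ell_t$ are taken in preorder, the internal nodes of $R_A$ are obtained as the $\LCA$s of the $t-1$ \emph{consecutive} pairs $(\ell_i,\ell_{i+1})$ (this already yields all of them, no counting needed), and the contracted topology is recovered in $O(1)$ parallel time from the pre/postorder sequences restricted to the nodes of $R_A$: the node at position $\pre(w)+1$ is the right child and the node at position $\post(w)-1$ is the left child of $w$ in $R_A$. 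If you want to salvage your argument, you would have to replace the ancestor-set restriction by some such global $O(1)$-query machinery; as written, the key step of your proof fails on unbalanced inputs.
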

\begin{proof}
 Using the algorithm of Schieber and Vishkin \cite{Schieber1988}, with preprocessing in time $\O(\log n)$ on $\O(n\log n)$ processors, we can answer \emph{lowest common ancestor} queries for a pair of nodes in $R$ in time $\O(1)$. Furthermore, we use the \emph{Euler-Tour Technique} (ETT) of Tarjan and Vishkin \cite{Tarjan1984} to compute the postorder and preorder numberings of nodes in $R$ in $\O(\log n)$ time on $\O(n)$ processors.
 We construct $R_A$ as follows, given the fact that all trees under consideration are $3$-regular.
 
 Let $\ell_1,\ell_2,\dots,\ell_t$ be the sequence of leaves of $R$ with labels in $A$ and ordered from left to right by the preorder numbers $\pre(\ell_i)$.
 We perform lowest common ancestor queries for each pair $(\ell_i,\ell_{i+1})$ of the leaf-sequence and yield the set of internal nodes $w_1,w_2,\dots,w_k$ of $R_A$, i.e. $\LCA(\ell_i,\ell_{i+1})=w_j\in V(R_A)$, for all $1\leq i < t$, respectively \footnote{Note that all internal nodes of $R_A$ are found in this way, since for every internal node $w$ of $R_A$ the right-most leaf $u$ of the left subtree below $w$ is neighboring the left-most leaf $v$ of the right subtree below $w$ in terms of the preorder sequence of leaves and $\LCA(u,v)=w$.}.
 Now let $\pre(R_A)$ and $\post(R_A)$ be the (partial) preorder and postorder sequences of nodes in $R$ restricted to the leaves and internal nodes of $R_A$.
 In order to reconstruct the (contracted) topology of $R_A$, we take both sequences and generate the parental pointers $\parent(v), v\in R_A$ in parallel as follows:

 For every internal node $w$ of $R_A$ with $\pre(w)=x$ and $\post(w)=y$ we look at position $x+1$ in $\pre(R_A)$ and position $y-1$ in $\post(R_A)$ to find the right-hand child and left-hand child of $w$, respectively. This can be done in parallel for every internal node $w_i$ of $R_A$ in $\O(1)$ time on $\O(t)$ processors. This completes the construction of the induced subtree $R_A$ (in time $\O(1)$ on $\O(t)$ processors, with pre-processing in amortized $\O(\log t)$ time on $\O(t\log t)$ processors).
\end{proof}

By Lemma 2.2 in \cite{Hon2004}, we have the following fact: Let $A$ and $B$ be two disjoint subsets of $\delta(S)$ and let $u$ be an internal node in $R_{A\cup B}$. Then, $L_{R_{A\cup B}}(u)|A=\emptyset$ or $L_{R_A}(v)$ for some $v\in R_A$ and similarly, $L_{R_{A\cup B}}(u)|B=\emptyset$ or $L_{R_B}(v)$ for some $v\in R_B$.

The next lemma implies that the first foreach-phase of Algorithm \ref{alg:Hon_etal} can be completed in $\O(\log n)$ time on $\O(n\log n)$ processors.

\begin{Lem}\label{lem:GEPlabel}
 Let $a\in \delta(S)$, a partition-labeling for $R_{\{a\}}$ and $R'_{\{a\}}$ can be found in $\O(\log t)$ time on $\O(t)$ processors, where $t$ is the number of leaves in $R$ with label $a$.
\end{Lem}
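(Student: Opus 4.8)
The plan is to exploit that $A=\{a\}$ is a singleton, so that in both induced subtrees $R_{\{a\}}$ and $R'_{\{a\}}$ every leaf carries the label $a$. Consequently, for an internal node $u$ of $R_{\{a\}}$ the multi-set $L_{R_{\{a\}}}(u)$ is just $m(u)$ copies of $a$, where $m(u)$ denotes the number of leaves of $R_{\{a\}}$ in the subtree rooted at $u$; the same holds in $R'_{\{a\}}$. Hence $L_{R_{\{a\}}}(u)=L_{R'_{\{a\}}}(v)$ if and only if $m(u)=m(v)$, and a partition-labeling is obtained simply by setting $\rho(u):=m(u)$ and $\rho'(v):=m(v)$: these values lie in $\{1,\dots,t\}$, which is within the required range, distinct counts inside one tree automatically receive distinct labels, and two nodes taken from the two different trees get the same label exactly when their restricted multi-sets coincide. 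So the whole task reduces to computing subtree leaf-counts in parallel.

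First I would build $R_{\{a\}}$ and $R'_{\{a\}}$ by invoking Lemma~\ref{lem:GEPinduced}. Treating the lowest-common-ancestor structure of Schieber and Vishkin \cite{Schieber1988} and the Euler-tour numberings \cite{Tarjan1984} of $R$ and $R'$ as a one-time global preprocessing (the $\O(n\log n)$-processor cost is shared among all labels and is not recharged per label), this construction takes $\O(\log t)$ time on $\O(t)$ processors; since the trees are $3$-regular with $t$ leaves labelled $a$, both $R_{\{a\}}$ and $R'_{\{a\}}$ have $\O(t)$ nodes. Then I would compute $m(\cdot)$ for every internal node by the Euler-tour technique: form the Euler tour of $R_{\{a\}}$, mark the entries corresponding to leaves, and read off $m(u)$ as the number of marked entries strictly between the first and the last occurrence of $u$ on the tour. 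This is a prefix-sum over a list of length $\O(t)$ and is computed by list ranking in $\O(\log t)$ time on $\O(t)$ processors (in fact on $\O(t/\log t)$). Doing the same for $R'_{\{a\}}$ and setting $\rho(u):=m(u)$, $\rho'(v):=m(v)$ finishes the construction, its correctness being the equivalence noted above. Summing these per-label costs over $a\in\delta(S)$ then uses $\O(n)$ processors beyond the global $\O(n\log n)$-processor preprocessing and $\O(\log n)$ time, which is exactly the bound claimed for the first foreach-phase of Algorithm~\ref{alg:Hon_etal}.

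I do not expect a genuine obstacle here. The one substantive point is the observation that a one-element label set collapses partition-labeling to a subtree-size computation; everything after that is standard parallel tree machinery. The details that need care are (i) keeping the processor bound at $\O(t)$ by regarding the lowest-common-ancestor and Euler-tour preprocessing of $R$ and $R'$ as global rather than per-label, and (ii) verifying that the leaf-counts can be used directly as labels with no renaming pass --- which holds because they already lie in $\{1,\dots,t\}$ and because, when $|A|=1$, the cardinality is a complete invariant of $L_{R_{\{a\}}}(u)$.
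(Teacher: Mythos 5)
Your proposal is correct and follows essentially the same route as the paper: observe that for a singleton label set only the cardinality $|L_{R_{\{a\}}}(u)|$ matters, and compute these subtree leaf-counts for all internal nodes via Euler-tour/postorder prefix sums in $\O(\log t)$ time on $\O(t)$ processors, assigning the counts directly as labels. Your extra remarks on treating the LCA/Euler-tour preprocessing as a one-time global step and on the label range are sensible bookkeeping but do not change the argument.
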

\begin{proof}
 Perform a postorder numbering on $R_{\{a\}}$ in $\O(\log t)$ time on $\O(t)$ processors (cf. \cite{Tarjan1984}). Since $L_{R_{\{a\}}}(u)$ only contains multiple copies of $a$, we only need to keep track of $|L_{R_{\{a\}}}(u)|$, i.e. the number of leaves of the subtree below $u$. The number of descendant leaves for each internal vertex $u$ can be obtained from the prefix sum of the weights of edges determined in the postorder numbering algorithm. Assign this number to $u$ and apply the same procedure to $R'_{\{a\}}$.
\end{proof}

Now, we have the partition-labeling for $R_{\{i\}}$ for every distinct label $i\in\delta(S)$. In the next phase of Algorithm \ref{alg:Hon_etal} the labels are paired together and the corresponding trees $R_{\{i\}}$ and $R_{\{j\}}$ for $i,j\in\delta(S)$ are merged to form $R_{\{i,j\}}$. A partition-labeling is computed based on the partition-labeling of the first round and this is repeated for $\log |\delta(S)|$ rounds until a partition-labeling for $R_{\delta(S)}$ is produced. Let us describe the relabeling of the internal nodes of $R_{A\cup B}$ and $R'_{A\cup B}$ for two distinct subsets $A,B\subset\delta(S)$, given the corresponding partition-labelings $(\rho_A,\rho'_A)$ and $(\rho_B,\rho'_B)$ for $(R_A,R'_A)$ and $(R_B,R'_B)$, respectively.

First, we consider $R_{A\cup B}$. For each internal node $u$ in $R_{A\cup B}$, assign a $2$-tuple $(a,b)$ to $u$ such that $a$ is set to the highest integer-label of $L_{R_{A\cup B}}(u)|A$ and $b$ is set to the highest integer-label of $L_{R_{A\cup B}}(u)|B$.
If $u\in R_A$ we set $a=\rho_{A}(u)$, and if $u\in R_B$ we set $b=\rho_{B}(u)$. If $L_{R_{A\cup B}}(u)|A=\emptyset$ we set $a=0$, and if $L_{R_{A\cup B}}(u)|B=\emptyset$ we set $b=0$. It remains the case where $L_{R_{A\cup B}}(u)|A\neq\emptyset$ and $u\notin R_A$. Here, there exists a node $v$ such that $L_{R_{A\cup B}}(u)|A= L_{R_A}(v)$ and we set $a=\rho_A(v)$, which is the highest label in $L_{R_{A\cup B}}(u)|A$. The case for $L_{R_{A\cup B}}(u)|B\neq\emptyset$ and $u\notin R_B$ is treated analogously and $R'_{A\cup B}$ is treated in the same way as $R_{A\cup B}$.
After we have determined the values in $(a,b)$ for every internal node $u$ of $R_{A\cup B}$ and $R'_{A\cup B}$, the $2$-tuples are sorted and a new integer (starting from $1$) is assigned to every distinct $2$-tuple. This integer is then assigned as a label to the internal node $u$ and a partition-labeling $\rho_{A\cup B}$ ($\rho'_{A\cup B}$) for $R_{A\cup B}$ ($R'_{A\cup B}$) is obtained. Algorithm \ref{alg:PPR} shows how this can be done efficiently in parallel.

\begin{algorithm}[!htb]
\KwIn{The tree $R_{A\cup B}$ with root $r$ and $t$ leaves, parental pointers $\parent(v), v\in R_{A\cup B}$ and partition-labelings $\rho_A$ and $\rho_B$ for $R_A$ and $R_B$.}
\KwOut{Partition-labeling $\rho_{A\cup B}$ for $R_{A\cup B}$.}
\BlankLine
 \ForEach{$u\in R_{A\cup B}$ \bf parallel}{
  \If{$u\in R_A$}{
    $a(u):=\rho_A(u)$\tcc*{$a(\cdot)$ and $b(\cdot)$ are initialized with $0$}
  }
  \If{$u\in R_B$}{
    $b(u):=\rho_B(u)$\;
  }
 }
 \For{$k=1$ \KwTo $\log t$}{
  \ForEach{$u\in R_{A\cup B}$ with $\parent(u)\neq r$ \bf parallel}{
   $a(\parent(u)):=\max\{a(u), a(\parent(u))\}$\;
   $b(\parent(u)):=\max\{b(u), b(\parent(u))\}$\;
   $\parent(u):=\parent(\parent(u))$\tcc*{Pointer-Jumping}
  }
 }
 \textsc{Parallel\_Radix\_Sort}$((a(u_1),b(u_1)),\dots,(a(u_t),b(u_t)))$\tcc*{In $\O(\log t)$ time on $\O(t)$ processors}
 Let $(a(u_{i_1}), b(u_{i_1})), \dots, (a(u_{i_t}), b(u_{i_t}))$ be the sorted sequence\;
 \For{$j=1,\dots,t$ \bf parallel}{
  ${\sf left}(j):=\min\left\{ j' \left|a(u_{i_j'})=a(u_{i_j}), b(u_{i_j'})=b(u_{i_j})\right.\right\}$\;
  ${\sf right}(j):=\max\left\{ j' \left|a(u_{i_j'})=a(u_{i_j}), b(u_{i_j'})=b(u_{i_j})\right.\right\}$\;
%   ${\sf left}(j):=\min\left\{ j' \left|\begin{array}{l}a(u_{i_j'})=a(u_{i_j})\\b(u_{i_j'})=b(u_{i_j})\end{array}\right.\right\}$
%   $\rho_{A\cup B}(u_{i_j}) := j$\;
 }
 \ForEach{$u_{i_j}$ with ${\sf left}(j)=j$ \bf parallel}{
  $\rho_{A\cup B}(u_{i_j}):=k$ where $u_{i_j}$ is the $k$-th such node in the sorted order\;
 }
 \ForEach{$j=1,\dots,t$ \bf parallel}{
  $\rho_{A\cup B}(u_{i_j}):=\rho_{A\cup B}(u_{i_{{\sf left}(j)}})$\;
 }
 %Relabel each distinct $2$-tuple of the sorted sequence with an integer starting from $1$\tcc*{i.e. $\rho_{A\cup B}(u)=1, \forall u: a(u)=b(u)=0$}
\caption{\textsc{Parallel\_Partition\_Relabeling}\label{alg:PPR}}
\end{algorithm}

Let us now formulate the corresponding lemma and show that the labels assigned by Algorithm \ref{alg:PPR} form a valid partition-labeling.

\begin{Lem}\label{lem:GEPrelabel}
 Given the partition-labelings $(\rho_A,\rho'_A)$ and $(\rho_B,\rho'_B)$ for $(R_A,R'_A)$ and $(R_B,R'_B)$, we can compute partition-labelings $\rho_{A\cup B}$ and $\rho'_{A\cup B}$ for $R_{A\cup B}$ and $R_{A\cup B}$ in $\O(\log t)$ time on $\O(t)$ processors where $t$ is the number of leaves in $R_{A\cup B}$.
\end{Lem}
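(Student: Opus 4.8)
The plan is to certify that Algorithm~\ref{alg:PPR}, run on $R_{A\cup B}$ and---with a \emph{shared} sorting and renumbering step---also on $R'_{A\cup B}$, produces a valid partition-labeling, and then to account for its PRAM cost; both the algorithm and the argument are parallelizations of the incremental scheme of \cite{Hon2004}. First I would show that after the pointer-jumping loop, for every internal node $u$ of $R_{A\cup B}$ the value $a(u)$ equals the maximum of $\rho_A(v)$ over all $v\in R_A$ lying in the subtree of $R_{A\cup B}$ rooted at $u$ (with $\max\emptyset=0$ enforced by the initialization), and symmetrically $b(u)$ for $\rho_B$; this is the standard ``aggregate the subtree maximum by pointer doubling'' computation, correct after $\lceil\log t\rceil$ rounds. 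Next I invoke the cited Lemma~2.2 of \cite{Hon2004}: either $L_{R_{A\cup B}}(u)|A=\emptyset$, in which case no $R_A$-node lies below $u$ and $a(u)=0$, or there is a unique $v\in R_A$ with $L_{R_{A\cup B}}(u)|A=L_{R_A}(v)$, and this $v$ is the \emph{highest} $R_A$-node below $u$, so $a(u)=\rho_A(v)$---provided the labels $\rho_A$ are assigned so that no $R_A$-node gets a strictly smaller label than one of its $R_A$-descendants, an invariant I would maintain through all $\log|\delta(S)|$ merge rounds (e.g. by renumbering each round's output in a postorder-consistent way). Then $(a(u),b(u))$ determines both restrictions $L_{R_{A\cup B}}(u)|A$ and $L_{R_{A\cup B}}(u)|B$, and since $L_{R_{A\cup B}}(u)=\big(L_{R_{A\cup B}}(u)|A\big)\uplus\big(L_{R_{A\cup B}}(u)|B\big)$ it determines $L_{R_{A\cup B}}(u)$; the same holds for $R'_{A\cup B}$ with $\rho'_A,\rho'_B$. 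Sorting the \emph{concatenation} of the two tuple-lists and then handing consecutive fresh integers to the distinct tuples makes the labels comparable across the two trees, so for $u\in R_{A\cup B}$ and $v\in R'_{A\cup B}$ one gets $\rho_{A\cup B}(u)=\rho'_{A\cup B}(v)$ iff their $(a,b)$-tuples coincide iff $L_{R_{A\cup B}}(u)=L_{R'_{A\cup B}}(v)$, which is precisely the partition-labeling condition.

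For the complexity, the initialization and the two concluding \textbf{foreach}-phases cost $\O(1)$ per node with one processor per internal node, and $R_{A\cup B}$ has $\O(t)$ internal nodes since it is $3$-regular. The pointer-jumping loop performs $\lceil\log t\rceil$ iterations of $\O(1)$ work per internal node; even though the height of $R_{A\cup B}$ need not be $\O(\log t)$, pointer doubling still terminates in $\O(\log t)$ rounds. The radix sort of $\O(t)$ tuples with keys bounded by $\O(t)$ runs in $\O(\log t)$ time on $\O(t)$ processors. Computing $\mathsf{left}(j)$ and $\mathsf{right}(j)$ amounts to detecting block boundaries in the sorted array and a prefix min/max inside each block, and assigning block-leaders consecutive labels and broadcasting them is a prefix sum plus a copy---each $\O(\log t)$ time on $\O(t)$ processors (several collapse to $\O(1)$ on a CRCW-PRAM). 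Adding these up gives $\O(\log t)$ time on $\O(t)$ processors.

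The main obstacle is the monotonicity invariant that lets ``the largest integer label in the subtree below $u$'' serve as a faithful surrogate for ``the restriction of $L_{R_{A\cup B}}(u)$ to $A$'': one must design the end-of-round relabeling so that labels are non-decreasing from an $R_A$-node down to its $R_A$-descendants, and then check that this property survives each merge. Once that is pinned down, the combinatorial core (Lemma~2.2 of \cite{Hon2004} together with the disjoint-union decomposition $L=L|A\uplus L|B$) and the PRAM bookkeeping above are routine.
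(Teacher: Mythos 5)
Your proposal is correct and follows essentially the same route as the paper's proof: it certifies Algorithm \ref{alg:PPR} (pointer-jumping to compute the subtree maxima of $\rho_A$ and $\rho_B$, equality of $(a,b)$-tuples $\Leftrightarrow$ equality of $L_{R_{A\cup B}}(\cdot)$ via the disjoint-union decomposition and Lemma 2.2 of Hon et al., then radix sort and block renumbering), and tallies the same $\O(\log t)$ time on $\O(t)$ processors. The two refinements you flag --- the label-monotonicity invariant ensuring the subtree maximum really is $\rho_A(v)$ for the node $v$ with $L_{R_{A\cup B}}(u)|A=L_{R_A}(v)$, and the joint renumbering of both trees' tuple lists so that labels remain comparable across $R_{A\cup B}$ and $R'_{A\cup B}$ --- are left implicit in the paper's proof, and they do hold (the base-case leaf-count labels are monotone, and since parent tuples dominate child tuples componentwise, lexicographic renumbering preserves monotonicity round by round).
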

\begin{proof}
 In Algorithm \ref{alg:PPR}, we perform a bottom-up pointer-jumping on the internal nodes of $R_{A\cup B}$ and forward the values for both, the labels in $L_{R_{A\cup B}}|A$ and in $L_{R_{A\cup B}}|B$. At every internal node $u$, during the $\O(\log t)$ rounds, we only keep track of the highest value regarding the two label sets.
 If $u\in R_A$, then $\rho_A(u)$ is the highest label of the set $L_{R_{A\cup B}}|A$ and we correctly set $a=\rho_A(u)$. Otherwise, if $u\notin R_A$ but there exist a child $s$ of $u$ in $R_{A\cup B}$ with $L_{R_{A\cup B}}(s)|A = L_{R_A}(t)$, then $\rho_A(t)$ is the highest label of the set $L_{R_{A\cup B}}|A$ and we set $a=\rho_A(t)$. If no such child exists, then $L_{R_{A\cup B}}(u)|A=\emptyset$ an we keep the initial value $a=0$. Similarly, the values for $b$ are set by Algorithm \ref{alg:PPR} according to $L_{R_{A\cup B}}|B$.

 After the relabeling process, we have $L_{R_{A\cup B}}(u) = L_{R_{A\cup B}}(u)|A \cup L_{R_{A\cup B}}(u)|B$ and hence $L_{R_{A\cup B}}(p)=L_{R'_{A\cup B}}(q)$ if and only if the corresponding $2$-tuples assigned to $p$ and $q$ are identical. Therefore, the labels assigned to the nodes after performing \textsc{Parallel\_Radix\_Sort} (cf. \cite{Blelloch1990}) on the $2$-tuples form a valid partition labeling. \textsc{Parallel\_Radix\_Sort} and the pointer-jumping are performed in $\O(\log t)$ time on $\O(t)$ processors and the initialization of $a(u), b(u)$ is done in $\O(1)$ parallel time. After the radix sort, we use bidirectional pointer-jumping to find for each node $u_{i_j}$ the leftmost and the rightmost node in the block of $u_{i_j}$, consisting of all the nodes $u_{i_k}$ which have the same pair of labels as $u_{i_j}$. Then we first assign new labels to the leftmost nodes of all blocks. This is done by performing a pointer-jumping on these nodes, using the pointers ${\sf left}(j)$ and ${\sf right}(j)$. Finally, in $\O(1)$ time, we can also assign these labels to the remaining nodes, again using the pointers ${\sf left}(j)$. Therefore, Algorithm \ref{alg:PPR} runs in $\O(\log t)$ time on $\O(t)$ processors.
\end{proof}

By Lemma \ref{lem:GEPinduced}, \ref{lem:GEPlabel} and \ref{lem:GEPrelabel} we have that the overall complexity for the partition labeling problem is $\O(\log n)$ time on $\O(n\log n)$ processors. Next, we show how the problem of identifying good edge-pairs between phylogenies $T_1,T_2$ is reduced to the partition labeling problem between two rooted trees $R,R'$.

\paragraph{Partition-Labeling and Good Edge-Pairs}

% Recall that a pair $(e_i,e_j), e_i\in E(T_1), e_j \in E(T_2)$ is said to be a \emph{good edge-pair} if and only if $\wt_1(e_i) = \wt_2(e_j)$ and both edges induce the same partition on leaf-labels and edge-weights on $T_1$ and $T_2$. Otherwise, $e_i,e_j$ are called \emph{bad edges}.
The reduction given by Hon et al. \cite{Hon2004} starts by setting $R=T_1$ and $R'=T_2$. Then an arbitrary leaf with label $a$ is fixed, and $R$ and $R'$ are rooted at the same internal node adjacent to the leaf with label $a$. %Regarding the edge-weights of $T_1$ and $T_2$, 
Then each internal edge $e=(u,v)$ is replaced by a path $u,s,v$ and a new leaf $w$ with a unique label $\rho(w)$, adjacent to $s$. %, we attach a new internal node $s$ between $u$ and $v$ and a new leaf $w$ with a unique label $\rho(w)$.
This means, for newly added leaves $w_1,w_2$ corresponding to edges $e_1,e_2$, we have $\rho(w_1)=\rho(w_2)$ if and only if $\wt(e_1)=\wt(e_2)$. This completes the construction of $R$ and $R'$.

\begin{Lem}
 The construction of $R$ and $R'$ takes $\O(\log n)$ time on $\O(n)$ processors.
\end{Lem}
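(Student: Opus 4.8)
The plan is to split the construction into four subtasks and bound each within the stated budget: (i) re-rooting $T_1$ and $T_2$ at the internal node adjacent to the leaf labelled $a$, i.e.\ producing a consistent pointer $\parent(v)$ for every vertex relative to the chosen root; (ii) subdividing every internal edge $e=(u,v)$ by a new node $s$ carrying a fresh pendant leaf $w$; (iii) assigning to each new leaf an integer label $\rho(w)$ so that two such labels coincide exactly when the corresponding internal edges of $T_1,T_2$ have equal weight; and (iv) assembling the updated data into the rooted trees $R$ and $R'$.

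For (i) I would apply the Euler-tour technique of Tarjan and Vishkin \cite{Tarjan1984}: form the Euler tour of the edge set of $T_i$, cut it at the designated root, and run list ranking; this yields, for every vertex, a pointer to its parent with respect to the new root in $\O(\log n)$ time on $\O(n)$ processors. Step (ii) is entirely local --- there are $n-3$ internal edges, so one processor per internal edge creates the vertices $s$ and $w$, sets $\parent(s):=u$, $\parent(v):=s$ and $\parent(w):=s$, and copies $\wt((u,v))$ onto the two halves of the new path, all in $\O(1)$ time on $\O(n)$ processors; the resulting trees still have $\O(n)$ vertices, so the later steps keep the same bounds.

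The step with real content is (iii). Here I would concatenate the $2(n-3)$ internal-edge weights of $T_1$ and $T_2$ and sort the combined list with a parallel comparison-based merge sort running in $\O(\log n)$ time on $\O(n)$ processors, breaking ties so that equal weights occupy a contiguous run. A prefix computation over the sorted array then gives each weight its rank among the distinct weights, and this rank is carried back --- along the permutation produced by the sort --- to the pendant leaf of the corresponding subdivided edge, where it becomes $\rho(w)$. Because both edge sets draw their labels from the same sorted array, $\rho(w_1)=\rho(w_2)$ holds (across the two trees as well as within one) precisely when $\wt(e_1)=\wt(e_2)$, which is exactly the property the reduction requires. Step (iv) merely collects the pointer and label arrays, $\O(1)$ time on $\O(n)$ processors. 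Summing the four contributions yields $\O(\log n)$ time on $\O(n)$ processors. The sorting-and-labelling in (iii) is both the time bottleneck and the part most easily pushed over the linear processor bound, so it is the step I would design and verify most carefully; the re-rooting in (i), though also logarithmic, is entirely standard.
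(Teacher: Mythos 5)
Your proposal matches the paper's own argument: rooting via the Euler-tour technique of Tarjan and Vishkin, local subdivision of internal edges with pendant leaves, and a parallel sort of the edge-weights followed by rank/label assignment so that equal weights (consistently across both trees) receive equal labels, all within $\O(\log n)$ time on $\O(n)$ processors. The only difference is that you spell out the tie-breaking, prefix-ranking and cross-tree consistency details more explicitly than the paper does, which is a refinement rather than a different route.
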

\begin{proof}
 The rooting of $R$ and $R'$ at an arbitrary node takes $\O(\log n)$ time on $\O(n)$ processors using the \emph{Euler-Tour Technique} and parallel prefix sum (cf. \cite{Tarjan1984}). In order to generate the labeled leaves that represent edge-weights, we temporarily assign for each edge $e=(u,v)$ and newly added leaf $w$ the edge-weight $\wt(e)$ to $w$. Then, we sort the sequence of leaf labels of the new leafs $w_1,w_2,\dots,w_{|E|}$ and assign a unique label $x\notin S$ such that $\rho(w_i)=\rho(w_j)$ if and only if $\wt(e_i)=\wt(e_j)$. This can also be accomplished in $\O(\log n)$ time on $\O(n)$ processors.
\end{proof}

In \cite{Hon2004} the partition-labeling is used to identify \emph{bad} edges in the trees. Here, we show how to use the labeling to compute pairs of \emph{good} edges efficiently in parallel.

Given the partition-labelings $\rho$ and $\rho'$, we first generate the sorted sequences of labels $\rho(v_1),\dots,\rho(v_{\ell})$ and $\rho'(v_1),\dots,\rho'(v_{\ell})$. Then, for every position $i$ of $\rho$ such that $v_i$ corresponds to an edge in the original tree $T_1$, we activate one processor which performs in $\O(\log n)$ time a binary search on the sequence $\rho'$ in order to check if $\rho(v_i)$ occurs as a label $\rho'(v_j)$ in the other sequence. If $v_j$ corresponds to an edge in the original tree $T_2$, then these two edges form a good edge-pair.
% Now, we have to relate the problem of finding good edge-pairs to the partition-labeling problem. In Lemma 3.2 of \cite{Hon2004} this is done in terms of \emph{bad edges}: Given a partition-labeling $(\rho,\rho')$ for $R$ and $R'$, let $e=(u,v)$ be an edge in $T_1$ (w.r.t. $T_2$) and $s$ be the unique internal node between $u$ and $v$ in $R$ (w.r.t. $R'$). Then, the edge $e$ is a \emph{bad edge} in $T_1$ (w.r.t. $T_2$) if and only if the label $\rho(s)$ (w.r.t. $\rho'(s)$) is unique in both $\rho$ and $\rho'$.

Altogether we have shown the following theorem.
\begin{Thm}
 The good edge-pairs between $T_1$ and $T_2$ can be identified in $\O(\log n)$ time on $\O(n\log n)$ processors.
\end{Thm}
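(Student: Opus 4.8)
The plan is to obtain the theorem by composing the reduction to the partition-labeling problem with the parallel partition-labeling machinery developed above. First I would invoke the preceding construction lemma: rooting $T_1$ and $T_2$ at an internal node adjacent to the fixed leaf labeled $a$, subdividing every internal edge $e=(u,v)$ into a path $u,s,v$ and attaching to $s$ a fresh leaf $w$ whose label $\rho(w)$ is chosen so that $\rho(w_1)=\rho(w_2)$ exactly when $\wt(e_1)=\wt(e_2)$, produces rooted trees $R,R'$ in $\O(\log n)$ time on $\O(n)$ processors. The structural point to record is that, because both trees are rooted near the \emph{same} leaf $a$, the leaf $a$ always lies strictly above every internal edge; hence an internal edge $e$ of $T_1$ and an internal edge $e'$ of $T_2$ induce the same partition of the leaf-labels and the same partition of the multiset of edge-weights and carry the same weight if and only if the subdivision node $s$ of $e$ in $R$ and the subdivision node $s'$ of $e'$ in $R'$ satisfy $L_R(s)=L_{R'}(s')$, i.e.\ if and only if they receive the same label under any partition-labeling of $(R,R')$. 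This is precisely Definition \ref{def_gep}.

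Second, I would run Algorithm \ref{alg:Hon_etal} on $(R,R')$ and certify its cost from the three lemmas just proved. The initial \textbf{foreach}-phase builds, for every distinct symbol $i\in\delta(S)$, the contracted subtrees $R_{\{i\}},R'_{\{i\}}$ (Lemma \ref{lem:GEPinduced}) and their partition-labelings (Lemma \ref{lem:GEPlabel}); since the leaf counts $t_i$ of the $R_{\{i\}}$ sum to $\Theta(n)$, so that $\sum_i\O(t_i\log t_i)=\O(n\log n)$, and the Schieber--Vishkin LCA preprocessing together with the Euler-tour numberings are computed once globally in $\O(\log n)$ time on $\O(n\log n)$ processors, the whole phase costs $\O(\log n)$ time on $\O(n\log n)$ processors. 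Each of the $\log|\delta(S)|$ merging rounds then pairs up the label classes, rebuilds the merged subtrees $R_{A\cup B}$ (again Lemma \ref{lem:GEPinduced}, reusing the shared preprocessing), and recomputes their partition-labelings from $\rho_A,\rho_B$ via Algorithm \ref{alg:PPR} and Lemma \ref{lem:GEPrelabel}; the total size of all active subtrees in a round is $O(n)$, so one round is $\O(\log n)$ time on $\O(n\log n)$ processors. The output is the partition-labeling $(\rho,\rho')$ of $(R,R')$.

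Third, I would extract the good edge-pairs. Sort the label sequences $\rho(v_1),\dots,\rho(v_\ell)$ and $\rho'(v_1),\dots,\rho'(v_\ell)$ by value with a parallel radix/merge sort ($\O(\log n)$ time, $\O(n)$ processors), carrying with each label a flag recording whether its node is the subdivision node of an internal edge of the corresponding original tree and, if so, which edge. Then allocate, for each position $i$ whose node is a subdivision node of a $T_1$-edge, one processor that binary-searches the sorted $\rho'$-sequence for the value $\rho(v_i)$ in $\O(\log n)$ time; whenever the search lands on a subdivision node of a $T_2$-edge, the two edges form a good edge-pair by the correspondence of the first step. This final step is $\O(\log n)$ time on $\O(n)$ processors, so the total is $\O(\log n)$ time on $\O(n\log n)$ processors.

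The step I expect to be the main obstacle is the time bound for the $\log|\delta(S)|$ merging rounds of Algorithm \ref{alg:Hon_etal}. Each round already costs $\Theta(\log n)$ time — the bottom-up pointer-jumping in Algorithm \ref{alg:PPR} and the parallel radix sort are both logarithmic in the subtree size, and the subtrees are $3$-regular phylogenies of possibly linear height — and the rounds are executed sequentially, so a direct round-by-round accounting yields $\O(\log^2 n)$ rather than $\O(\log n)$. Pushing this down to $\O(\log n)$ requires more than invoking Lemmas \ref{lem:GEPinduced}--\ref{lem:GEPrelabel} round by round: one must either pipeline the rounds, or argue that the pointer-jumping in round $k$ need only propagate information over the part of $R_{A\cup B}$ created by that particular merge, so that the logarithmic depths charged over all rounds telescope; in either case the $\O(n\log n)$-processor budget is preserved because the per-round work is $\O(n\log n)$. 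The remaining verification — that the label classes produced after the last round coincide exactly with equality of leaf-label multisets, hence with the good-edge-pair condition through the reduction — is routine given Lemma \ref{lem:GEPrelabel}.
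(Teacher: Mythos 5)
Your proposal follows essentially the same route as the paper: the reduction of good edge-pair detection to partition-labeling via rooting at a common leaf and subdividing each internal edge with a weight-encoding leaf, the invocation of Algorithm~\ref{alg:Hon_etal} with Lemmas~\ref{lem:GEPinduced}, \ref{lem:GEPlabel} and \ref{lem:GEPrelabel}, and the final extraction of pairs by sorting the label sequences and binary-searching $\rho'$ for each node of $\rho$ that corresponds to an edge of $T_1$; your justification that equality of labels of the subdivision nodes captures exactly the conditions of Definition~\ref{def_gep} is, if anything, spelled out more explicitly than in the paper. The one place where you stop short --- the accounting of the $\log|\delta(S)|$ sequential merging rounds, where a direct round-by-round bound gives $\O(\log^2 n)$ time since each round may cost $\Theta(\log n)$ (e.g.\ when one label is carried by a constant fraction of the leaves while the remaining labels are distinct) --- is not resolved by the paper either: after Lemmas~\ref{lem:GEPinduced}--\ref{lem:GEPrelabel} it simply asserts that the overall complexity of the partition-labeling problem is $\O(\log n)$ time on $\O(n\log n)$ processors, with no pipelining or telescoping argument of the kind you sketch. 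So you have not missed an idea that the paper supplies; rather, the obstacle you flag points at a step the paper itself leaves unjustified, and closing it (or settling for an $\O(\log^2 n)$ bound on this preprocessing phase) would be a genuine addition.
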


\section{Summary}

We have designed a new efficient parallel approximation algorithm for the nearest-neighbor-interchange-distance (\nni) of weighted phylogenies. Based on DasGupta’s approximation algorithm \cite{DasGupta2000} our algorithm achieves an approximation ratio of $\O(\log n)$ and also constructs an associated sequence of \nni-operations. For the case that no good edge-pairs exist, our algorithm runs on a CRCW-PRAM with running time $\O(\log n)$ and $\O(n)$ processors. Furthermore, we show that the good edge-pairs between two weighted phylogenies can be identified in $\O(\log n)$ time on $\O(n\log n)$ processors.

The most challenging open problem is to settle the question if this problem is APX-hard. It would also be interesting to construct new algorithms with better approximation ratio for this problem.

\newcommand{\etalchar}[1]{$^{#1}$}

\end{document}